\def\algbackskip{\hskip-\ALG@thistlm}
\newcommand*{\algrule}[1][\algorithmicindent]{%
    \makebox[#1][l]{%
        \hspace*{.2em}% <------------- This is where the rule starts from
        \vrule height .75\baselineskip depth .25\baselineskip
    }
}
\def\ALG@printindent{% 
    \ifnum \theALG@nested>0% is there anything to print
    \ifx\ALG@text\ALG@x@notext% is this an end group without any text?
    % do nothing
    \else
    \unskip
    % draw a rule for each indent level
    \ALG@printindent@tempcnta=1
    \loop
    \algrule[\csname ALG@ind@\the\ALG@printindent@tempcnta\endcsname]%
    \advance \ALG@printindent@tempcnta 1
    \ifnum \ALG@printindent@tempcnta<\numexpr\theALG@nested+1\relax
    \repeat
    \fi
    \fi
}
\patchcmd{\ALG@doentity}{\noindent\hskip\ALG@tlm}{\ALG@printindent}{}{\errmessage{failed to patch}}
\patchcmd{\ALG@doentity}{\item[]\nointerlineskip}{}{}{} % no spurious vertical space
\theoremstyle{thmstyleone}%
\newtheorem{thm}{Theorem}
\newtheorem{prop}[thm]{Proposition}
\theoremstyle{thmstylethree}%
\newtheorem{defn}[thm]{Definition}
\theoremstyle{thmstyletwo}%
\newtheorem{rem}[thm]{Remark}
\providecommand{\lemmaname}{Lemma}
\Crefname{cor}{Corollary}{Corollaries}
\Crefname{prop}{Proposition}{Propositions}
\newcommand{\conv}[3]{#1 \otimes #2 \, (#3)}
\newcommand{\posPart}[1]{\left[#1\right]^+}
\newcommand{\card}[1]{n\!\left(#1\right)}
\newcommand{\lpi}[1]{#1^{-1}_{\downarrow}}
\newcommand{\lpib}[1]{\left(#1\right)^{-1}_{\downarrow}}
\newcommand{\upi}[1]{#1^{-1}_{\uparrow}}
\newcommand{\upib}[1]{\left(#1\right)^{-1}_{\uparrow}}
\newcommand{\genop}[1]{#1^*}
\begin{document}

\title[Extending the NC Algorithmic Toolbox for UPP Functions]{Extending the Network~Calculus Algorithmic Toolbox for Ultimately~Pseudo-Periodic~Functions: Pseudo-Inverse and Composition}

%\author{
%    Raffaele Zippo%
%    \thanks{
%        R. Zippo is with
%        Dipartimento di Ingegneria dell'Informazione,
%        Universit{\`a} di Firenze,
%        Via di S. Marta 3, 50139 Firenze, Italy.
%        e-mail: {\tt raffaele.zippo@unifi.it}
%    } %
%    \thanks{
%        R. Zippo and P. Nikolaus are with 
%        Distributed Computer Systems Lab (DISCO),
%        TU Kaiserslautern,
%        Paul-Ehrlich-Straße 34, 67663 Kaiserslautern.
%        e-mail: {\tt nikolaus@cs.uni-kl.de}
%    } %
%    \thanks{
%        R. Zippo and G. Stea are with
%        Dipartimento di Ingegneria dell'Informazione,
%        Universit{\`a} di Pisa,
%        Largo Lucio Lazzarino 1, 56122 Pisa, Italy.
%        e-mail: {\tt giovanni.stea@unipi.it, raffaele.zippo@phd.unipi.it}
%    } , %
%    Paul Nikolaus%
%    \footnotemark[2] , %
%    Giovanni Stea%
%    \footnotemark[3]
%}

\author*[1,2,3]{\fnm{Raffaele} \sur{Zippo}}\email{raffaele.zippo@unifi.it}
%\equalcont{These authors contributed equally to this work.}

\author[3]{\fnm{Paul} \sur{Nikolaus}}\email{nikolaus@cs.uni-kl.de}
%\equalcont{These authors contributed equally to this work.}

\author[2]{\fnm{Giovanni} \sur{Stea}}\email{giovanni.stea@unipi.it}
%\equalcont{These authors contributed equally to this work.}

\affil*[1]{\orgdiv{Dipartimento di Ingegneria dell'Informazione}, \orgname{Universit{\`a} di Firenze}, \orgaddress{\street{Via di S. Marta 3}, \city{Firenze}, \postcode{50139}, \country{Italy}}}

\affil[2]{\orgdiv{Dipartimento di Ingegneria dell'Informazione}, \orgname{Universit{\`a} di Pisa}, \orgaddress{\street{Largo Lucio Lazzarino 1}, \city{Pisa}, \postcode{56122}, \country{Italy}}}

\affil[3]{\orgdiv{Distributed Computer Systems Lab (DISCO)}, \orgname{TU Kaiserslautern}, \orgaddress{\street{Paul-Ehrlich-Straße 34}, \city{Kaiserslautern}, \postcode{67663}, \country{Germany}}}

%\date{}
    
\abstract{
	Network Calculus (NC) is an algebraic theory that represents traffic and service guarantees as curves in a Cartesian plane, in order to compute performance guarantees for flows traversing a network.
	NC uses transformation operations, e.g., min-plus convolution of two curves, to model how the traffic profile changes with the traversal of network nodes.
	Such operations, while mathematically well-defined, can quickly become unmanageable to compute using simple pen and paper for any non-trivial case, hence the need for algorithmic descriptions.
	Previous work identified the class of piecewise affine functions which are ultimately pseudo-periodic (UPP) as being closed under the main NC operations and able to be described finitely. Algorithms that embody NC operations taking as operands UPP curves have been defined and proved correct, thus enabling software implementations of these operations.
	
	However, recent advancements in NC make use of operations, namely the \emph{lower pseudo-inverse}, \emph{upper pseudo-inverse}, and \emph{composition}, that are well defined from an algebraic standpoint, but whose algorithmic aspects have not been addressed yet.
	In this paper, we introduce algorithms for the above operations when operands are UPP curves, thus extending the available algorithmic toolbox for NC. We discuss the algorithmic properties of these operations, providing formal proofs of correctness.
}

\keywords{Network calculus, (min,+)-algebra, algorithms, pseudo-inverse, composition}

\maketitle

\section{Introduction}
\label{sec:Introduction}
Network Calculus (NC) is an algebraic theory where traffic and service guarantees are represented as functions of time. 
The I/O transformations that network traversal imposes on an input traffic can be represented as operations of min-plus algebra involving these curves. 
This allows one to compute worst-case performance guarantees for a flow traversing a network.
NC dates back to the early 1990s, and it is mainly due to the work of Cruz \cite{Cruz1991NetCal-I, Cruz1991NetCal-II}, Le Boudec and Thiran \cite{LeBoudec2001NetCal}, and Chang \cite{Chang2000Performance}.
Originally devised for the Internet, where it was used to engineer models of service \cite{le1998application,firoiu2002theories, 669170, bennett2002delay, fidler2004parameter}, it has found applications in several other contexts, from sensor networks \cite{schmitt2005sensor} to avionic networks \cite{charara2006methods, bauer2010worst}, industrial networks \cite{zhang2019analysis,maile2020network,BoyerTSN2021} automotive systems \cite{rehm2021road} and systems architecture \cite{andreozzi2020heterogeneous,BoyerGDM20}.

NC characterizes constraints on traffic arrivals (due to traffic shaping) and on minimum received service (due to scheduling) as \emph{curves}, i.e., functions of time.\footnote{We use the terms \emph{function} and \emph{curve} interchangeably}
These curves are then used with operators from min-plus and max-plus algebra to derive further insights about the system.
For example, the per-flow service curve of a scheduler, such as Weighted Round Robin, or performance bounds on the traffic such as an end-to-end delay bound.

While these operations can be computed with pen and paper for simple examples, in most practical cases the application of NC requires the use of software. 
To this end, \cite{bouillard2008algorithmic,Bouillard2018DNC} provide an ``algorithmic toolbox'' for NC: 
they show that piecewise affine functions that are ultimately pseudo-periodic (UPP) represent good models for both traffic and service guarantees. 
Moreover, they prove that this class of functions is closed under the main NC operations and can be described with a finite amount of information. 
Additionally, they introduce the algorithms that embody the main NC operations, computing UPP results starting from UPP operands. 
The results in these works cover the main operations used in NC, such as minimum, min-plus convolution, min-plus deconvolution, etc. (see \cite{Bouillard2018DNC} for a complete list).
The toolbox was first implemented in the COINC free library \cite{bouillard2009coinc}, which is not available anymore, and later by the commercial library RTaW-Pegase \cite{RTaW-Pegase-Library}.

However, other NC operators, i.e., the composition and lower and upper pseudo-inverse, have been in focus in recent NC literature. 
\cite{Liebeherr2017} shows the duality between min-plus and max-plus models, and how the lower and upper pseudo-inverses can be used to switch between the two models.
In \cite[Theorem~8.6]{Bouillard2018DNC}, lower pseudo-inverse and composition are used to compute the per-flow service curve for a Weighted Round-Robin scheduler; 
in \cite[Theorem~1]{Tabatabaee2021IWRR}, a similar result is shown for a Interleaved Weighted Round-Robin scheduler, using again lower pseudo-inverse and composition; 
in \cite{Tabatabaee2022DRR}, authors show that several service curves can be found for a flow scheduled in a Deficit Round-Robin scheduler, under different hypotheses regarding cross-traffic. 
\cite{mohammadpour2019DelayKnownRate, mohammadpour2022improvedNCinTSN} use pseudo-inverses and compositions to study properties of IEEE Time-Sensitive Networking (TSN) \cite{TSNTaskGroup}, a standard relevant for many applications.
We can therefore obtain the results presented in these papers, using arbitrarily complex UPP curves as inputs.

While the algebraic formulation of these three operations is well known, their algorithmic aspects have not been addressed, to the best of our knowledge. 
This means that we do not have publicly known algorithms that compute these operations yet.
 
In this paper, we aim to fill this gap and extend the existing algorithm toolbox to include lower- and upper-pseudo inverses and composition of functions.

We show that the UPP class is closed with respect to these operations, and provide algorithms to compute the result of each one. 
We prove that all of them have linear complexity with respect to the number of segments that represent the operands. 
We design specialized, more efficient versions of the composition algorithm that leverage characteristics of the operand functions -- notably, their being Ultimately Affine (UA) or Ultimately Constant (UC) \cite{Bouillard2018DNC}. 
Last, we exemplify our findings on a comprehensive proof of concept, showing how to compute the per-flow service curve of \cite[Theorem~1]{Tabatabaee2021IWRR}, which makes use of all the findings in this paper. 
The algorithms described in this paper, together with those for known NC operators, are implemented in the Nancy open-source toolbox \cite{Nancy22}, which, to the best of our knowledge, is the only public one to implement UPP algorithms.

The rest of this paper is organized as follows:
\Cref{sec:NcBasics} briefly introduces Network Calculus notation and some basic results.
\Cref{sec:MathBackground} introduces the definitions and notation used throughout the paper, and discusses the kind of results that we need to provide for each operator to enable their implementation.
In \Cref{sec:PseudoInverses}, we present the results for the lower and upper pseudo-inverse operators, including their properties for UPP curves and algorithms to compute them.
\Cref{sec:Composition} shows our results for the composition operator, including its properties on UPP curves and an algorithm to compute it.
In \Cref{sec:PoC}, we report a proof-of-concept evaluation, by computing the results of a recent NC paper via our algorithms.
Finally, \Cref{sec:Conclusions} draws some conclusions and highlights directions for future works.

\section{Network Calculus Basics}
\label{sec:NcBasics}

This section briefly introduces Network Calculus (NC). 
A NC flow is represented by a function of time $A(t)$ that counts the amount of traffic arrived by time $t$. Such function is necessarily wide-sense increasing. 
It is often assumed to be left-continuous, i.e., $A(t)$ represents the number of bits in $[0,t[$. 
In particular, $A(0)=0$.

Flows can be constrained by \emph{arrival curves}. A wide-sense increasing function $\alpha$ is an \emph{arrival curve} for a flow  $A$ if
\begin{equation*}
    A(t)-A(s) \leq \alpha(t-s), \qquad \forall s \leq t.
\end{equation*}
For instance, a \emph{leaky-bucket shaper}, with a \emph{rate} $\rho$ and a \emph{burst size} $\sigma$, enforces the concave affine arrival curve $\gamma_{\sigma,\rho}(t) = \sigma + \rho t$, as shown in \Cref{fig:leaky-bucket}. 
In particular, this means that the long-term arrival rate of the flow cannot exceed $\rho$. 
Leaky-bucket shapers are often employed at the entrance of a network, to ensure that the traffic injected does not overflow the negotiated amount. 

\begin{figure}[!t]
    \centering
    % \includediagram[width=.55\linewidth]{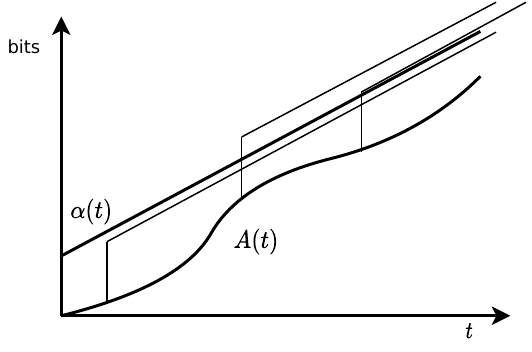}
    \includegraphics[width=.55\linewidth]{figures/arrivalCurve}
    \caption{
    Example of leaky-bucket shaper, taken from \cite{andreozzi2020heterogeneous}. 
    The traffic process $A(t)$ is always below the arrival curve $\alpha(t)$ and its translations along $A(t)$.}
    \label{fig:leaky-bucket}
\end{figure}

\begin{figure}[!t]
    \centering
    % \includediagram[width=.45\linewidth]{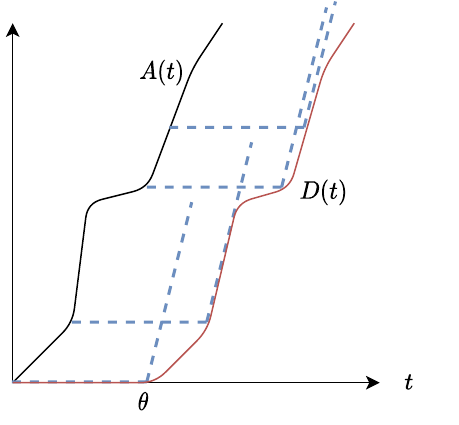}
    \includegraphics[width=.45\linewidth]{figures/convolution}
    \caption{Graphical interpretation of the convolution operation.}
    \label{fig:convolution}
\end{figure}

\begin{figure}[!t]
    \centering
    %  \includediagram[width=.55\linewidth]{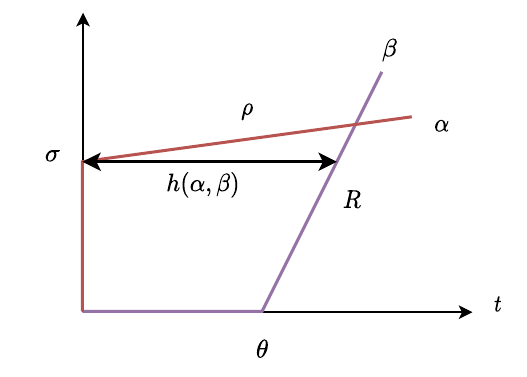}
     \includegraphics[width=.55\linewidth]{figures/delay-bound}
    \caption{Graphical example of a delay bound.}
    \label{fig:delay-bound}
\end{figure}

Let $A$ and $D$ be non-decreasing functions that describe the same data flow at the input and output of a lossless network element (or \emph{node}), respectively.
If that node does not create data internally (which is often the case), causality requires that $A\geq D$.
We say that the flow is guaranteed the (minimum) \emph{service curve} $\beta$ if 
\begin{equation}
    \label{eq:conv}
        D(t) \geq \inf_{0\leq s\leq t} A(s)+\beta(t-s)
        \eqqcolon (A \otimes \beta)(t), \qquad \forall t\geq 0.
\end{equation}
We call the operation on the right the \emph{(min-plus) convolution} of $A$ and $\beta$.
The function argument $t$ is omitted whenever it is clear from the context.
Several network elements, such as delay elements, schedulers or links, can be modeled through service curves.

A very frequent case is the one of \emph{rate-latency} service curves, defined as
\begin{equation*}
    \beta_{R, \theta}(t) = R \cdot \posPart{t-\theta}
\end{equation*}
for some rate $R>0$ and latency $\theta>0$.
Notation $\posPart{\cdot}$ denotes $\max\left\{\cdot, 0\right\}$.
For instance, a constant-rate server (e.g., a wired link) can be modeled as a rate-latency curve with zero latency. 
\Cref{fig:convolution} shows the lower bound of $D(t)$ obtained when convolving the input $A(t)$ with a rate-latency service curve.

A point of strength of NC is that service curves are \emph{composable}: the end-to-end service curve of a tandem of nodes traversed by the same flow can be computed as the convolution of the service curves of each node.

For a flow that traverses a service curve (be it the one of a single node, or the end-to-end service curve of a tandem computed as discussed above), a tight \emph{upper bound} on the delay can be computed by combining its arrival curve $\alpha$ and the service curve $\beta$ itself, as follows:
\begin{equation}
    \label{eq:delaybound}
    h(\alpha,\beta) = \sup_{t\geq 0} \left[\inf \left\{ d\geq 0 \mid \alpha(t-d)\leq \beta(t) \right\} \right].
\end{equation}

The quantity $h(\alpha,\beta)$ is in fact the maximum horizontal distance between $\alpha$ and $\beta$, as shown in \Cref{fig:delay-bound}. 
Therefore, computing the end-to-end service curve of a flow in a tandem traversal is the crucial step towards obtaining its worst-case delay bound. 

The above introduction, albeit concise, should convince the alert reader that algorithms for automated manipulation of curves, implementing NC operators, are necessary to reap the benefits of NC algebra in practical scenarios. 
Many such algorithms have been discussed in \cite{bouillard2008algorithmic, Bouillard2018DNC}. 
The next section describes the generic algorithmic framework exposed in these papers, which we extend in this work.

\section{Mathematical Background and Notation}
\label{sec:MathBackground}

In this section, we provide an overview of the mathematical background for this paper, including the definitions used and the results we aim to provide.

NC computations can be implemented in software.
In order to do so, one needs to provide finite representations of functions and well-formed algorithms for NC operations.
According to the widely accepted approach described in \cite{bouillard2008algorithmic, Bouillard2018DNC}, a sufficiently generic class of functions useful for NC computations is that of (i) piecewise affine (ii) ultimately pseudo-periodic $\mathbb{Q_+} \rightarrow \mathbb{Q}$ functions. 
We define both properties (i) and (ii) separately:

\begin{defn}[Piecewise Affine Function {\cite[p.~9]{bouillard2008algorithmic}}]
    We say that a function $ f $ is \emph{piecewise affine} (PA) if there exists an increasing sequence $(a_{i})_{i\in\mathbb{N}}$ which tends to $+\infty$, such that $a_{0}=0$ and $\forall i\geq0,$ $f$ is affine on $\left]a_{i},a_{i+1}\right[$i.e.,
    \begin{equation*}
        f(t)\in\left\{ b_{i}+\rho_{i}t,+\infty,-\infty\right\} ,\qquad\forall t\in\left]a_{i},a_{i+1}\right[.
    \end{equation*}
    The $a_{i}$'s are called \emph{breakpoints}.
\end{defn}

\begin{defn}[Ultimately Pseudo-Periodic Function {\cite[p.~8]{bouillard2008algorithmic} }]
    Let $f$ be a function $\mathbb{Q}^{+}\rightarrow\mathbb{Q}  \cup \{+\infty, -\infty\}$.
    Then, $f$ is ultimately pseudo-periodic (UPP) if there exist $T_{f}\in\mathbb{Q}^{+},d_{f}\in\mathbb{Q}^{+}\setminus\{0\},c_{f}\in\mathbb{Q}$ such that 
    \begin{equation}
        f(t+k_{f}\cdot d_{f})=f(t)+k_{f}\cdot c_{f}, \qquad\forall t\ge T_{f},\forall k_{f}\in\mathbb{N}_{0}.
        \label{eq:UPP-property}
    \end{equation}
    We call $T_{f}$ the (pseudo-periodic) start or length of the initial transient, $d_{f}$ the (pseudo-periodic) length, and $c_{f}$ the (pseudo-periodic) height.
\end{defn}

In \cite{bouillard2008algorithmic}, this class of functions is shown to be stable w.r.t. all min-plus operations, while functions $\mathbb{R_+} \rightarrow \mathbb{R}$ are not.\footnote{ An alternative class of functions with such stability is $\mathbb{N} \rightarrow \mathbb{R}$, however this is only feasible for models where time is discrete.}
Henceforth, we will consider all functions to be piecewise affine and UPP.
For such functions, it is enough to store a representation of the initial transient part and of one period, which is a finite amount of information.
This is exemplified in Figure \ref{fig:PseudoPeriodicExample}.

\begin{figure}[!t]
    \centering
    \begin{subfigure}{0.45\linewidth}
        % \tikzsetnextfilename{PseudoPeriodicExample_f}
        % \includegraphics[width=\linewidth]{figures/PseudoPeriodicExample_f}
        \includegraphics[width=\linewidth]{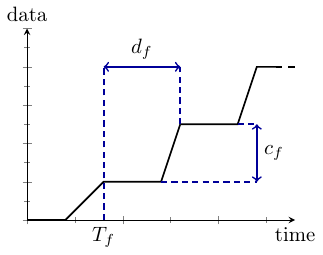}
        \caption{$f$}
        \label{fig:PseudoPeriodicExample.f}
    \end{subfigure}
    \hspace*{\fill}
    \begin{subfigure}{0.45\linewidth}
        % \tikzsetnextfilename{PseudoPeriodicExample_Rf}
        % \includegraphics[width=\linewidth]{figures/PseudoPeriodicExample_Rf}
        \includegraphics[width=\linewidth]{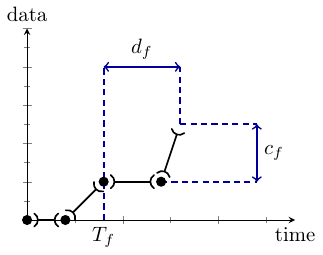}
        \caption{$R_f$}
        \label{fig:PseudoPeriodicExample.Rf}
    \end{subfigure}

    \caption{Example of ultimately pseudo-periodic piecewise affine function $f$ and its representation $R_f$. 
    %\paul{I really like this plot:-)}
    }
    \label{fig:PseudoPeriodicExample}
\end{figure}

Accordingly, we call a \emph{representation} $R_f$ of a function $f$ the tuple $(S, T, d, c)$, where $T, d, c$ are the values described above, and $S$ is a sequence of points and open segments describing $f$ in $[0, T + d[$.
We use both points and open segments in order to easily model discontinuities. 
We will use the umbrella term \textit{elements} to encompass both when convenient.

\begin{defn}[Point]
    We define a \emph{point} as a tuple
    \begin{equation*}
        p_i \coloneqq (t_{i}, f(t_{i})), \qquad i \in \left\{1, \dots, n\right\}.
    \end{equation*}
\end{defn}

\begin{defn}[Segment]
    We define a \emph{segment} as a tuple
    \begin{equation*}
        s_i \coloneqq \left(t_{i}, t_{i+1}, f(t_{i}^{+}), f(t_{i+1}^{-})\right), \qquad i \in \left\{1, \dots, n\right\}
    \end{equation*} 
    which describes $ f $ in the open interval $ \left]t_{i}, t_{i+1} \right[ $ in which it is affine, i.e., 
    \begin{equation*} 
        f(t) = f(t_{i}^+) + \frac{f(t_{i+1}^-) - f(t_{i}^+)}{t_{i+1} - t_{i}} \cdot (t - t_{i}) \eqqcolon b + r \cdot (t - t_{i}) \qquad \text{for all } t \in \left]t_{i}, t_{i+1} \right[.
    \end{equation*}
    If $ r = 0, $ we call $ s_i $ a \emph{constant segment}.
\end{defn}

\begin{defn}[Sequence]
    We define a sequence $S^D_f$ as on ordered set of elements ${e_1, \dots, e_n}$ that alternate between points and segments and describe $ f $ in the finite interval $D$.
\end{defn}

For example, given $D = \left[ 0, T \right[$, then $S^D_f = \{p_1, s_1, p_2, \dots, p_n, s_n\}$ where $p_1 = \left(0, f(0)\right)$ and, assuming $p_n = \left(t_n, f(t_n)\right)$ for some $0 < t_n < T$, $s_n = \left(t_n, T, f(t_n^+), f(T^-)\right)$.

Note that, given $R_f$, one can compute $f(t)$ for all $t \ge 0$, and also $S^D_f$ for any interval $D$. 
Furthermore, being finite, $R_f$ can be used as a data structure to represent $f$ in code. 
As is discussed in depth in \cite{ZippoEtAlii2022}, $R_f$ is not unique, and using a \emph{non-minimal} representation of $f$ can affect the efficiency of the computations.
Given a sequence $S$, let $\card{S}$ be its cardinality.
As it is useful in the following, we define \emph{Cut} to be an (obvious) algorithm that, given $R_f$ and $D$, computes $S^D_f$.
With a little abuse of notation, we use min-plus operators directly on finite sequences such as $S^D_f$. 
For instance, given the lower pseudo-inverse of $f$, $\lpib{f}$ (its formal definition is in the next section), we will write $\lpib{S^D_f}$, to express that we are computing it on $f$ over the limited interval $D$.

A NC operator can then be defined computationally as an algorithm that takes UPP representations of its input functions and yields an UPP representation of the result, provided that the class of UPP functions is closed with respect to such operator.
Considering a generic unitary operator $\genop{[\cdot]}$, in order to compute $\genop{f}$ we need an algorithm that computes $R_{\genop{f}}$ from $R_f$, i.e., $R_f \rightarrow R_{\genop{f}}$.
We call this \emph{by-curve} algorithm.
This process can be divided in the following steps:
\begin{enumerate}
    \item compute valid parameters $T_{\genop{f}}, d_{\genop{f}}$ and $c_{\genop{f}}$ for the result;
    \item compute the domain $D$ for the sequence $S^{D}_f = \text{Cut}(R_f, D)$;
    \item compute $S^{D}_f \rightarrow S_{\genop{f}}$, i.e., use an algorithm that computes the resulting sequence from the sequence of the operand. 
        We call this \emph{by-sequence} algorithm for operator $\genop{[\cdot]}$;
    \item return $R_{\genop{f}} = (S_{\genop{f}}, T_{\genop{f}}, d_{\genop{f}}, c_{\genop{f}})$.
\end{enumerate}

We therefore need to provide the following results:
\begin{itemize}
    \item a proof that the result of the operator $\genop{[\cdot]}$, applied to an UPP function, yields an UPP result;
    \item a way to compute UPP parameters $T_{\genop{f}}, d_{\genop{f}}$ and $c_{\genop{f}}$ from $R_f$;
    \item a valid domain $D$, again to be computed from $R_f$;
    \item a \emph{by-sequence} algorithm.
\end{itemize}

Combining the above results, we can then construct the \emph{by-curve} algorithm for operator $\genop{[\cdot]}$, 
thus allowing us to compute $\genop{[\cdot]}$ for any UPP curve.\footnote{ The same process applies also, with minor adjustments, to binary operators}

Works \cite{bouillard2008algorithmic, Bouillard2018DNC} provided such computational descriptions for fundamental NC operators such as minimum, sum, convolution, and many others.
In this paper, we extend the above toolbox by adding the lower pseudo-inverse, upper pseudo-inverse, and composition operators.
To the best of our knowledge, no computational description of the above has been formalized before, despite their relevance in the NC literature.

Before presenting our contribution, we introduce two more definitions that will be used throughout the paper.

\begin{defn}[Ultimately Affine Function]
    \label{def:ultimately-affine}
    Let $f$ be a function $\mathbb{Q}^{+}\rightarrow\mathbb{Q}$. 
    Then, $ f $ is Ultimately Affine (UA), if there exist $T_{f}^{a}\in\mathbb{Q}^{+},\rho_{f}\in\mathbb{Q}$
    such that 
    \begin{equation}
        f(t) = f(T_{f}^{a}) + \rho_{f} \cdot \left(t - T_{f}^{a}\right), \qquad\forall t \ge T_{f}^{a}.
        \label{eq:ultimately-affine}
    \end{equation}
\end{defn}

Note that this definition differs from the one in the literature \cite{bouillard2008algorithmic}, but we prove their equivalence in \Cref*{app-sec:properties-UAF}. 
UA functions are (obviously) UPP as well, their period being a single segment of slope $\rho_f$ and arbitrary length starting at $T_{f}^{a}$. 
They occur quite often in NC, e.g., the arrival curve of a leaky-bucket shaper or a rate-latency service curve are both UA. 
An \emph{Ultimately Constant} (UC) function is UA with $\rho_f = 0$.

Unlike UPP, the class of UA functions is not closed with respect to NC operations. 
For instance, \cite{ZippoEtAlii2022} shows that flow-controlled networks with rate-latency (hence UA) service curves yield closed-loop service curves that are UPP, but not necessarily UA again. 
Moreover, in many cases, the service curves of individual flows served by Round-Robin schedulers are UPP, but not UA either (see, e.g., \cite{Boyer2012, Tabatabaee2021IWRR, Tabatabaee2022DRR}). 
However, there are cases when simpler algorithms for NC operations can be derived if one assumes that operands are UA. 
For this reason, there are NC toolboxes that only consider UA functions, e.g., \cite{DISCO}. 
% \paul{If I understood correctly, the DISCO DNC does not support the general class of UA functions. To be precise, these curves are supported, but some operations such as the deconvolution is limited to concave / convex functions and does not allow for arbitrary jumps. However, This is just FYI, as the sentence is still correct, since the mentioned curves are subclasses of UA}
A possible approach to NC analysis is thus to approximate UPP (non-UA) functions with UA lower/upper bounds, trading some accuracy for computation time \cite{guan2013finitaryRTC, lampka17GeneralFinitaryRTC}.
Throughout this paper, we provide general algorithms for UPP functions. 
However, we also show what is to be gained -- in terms of domain compactness and/or algorithmic efficiency -- when we can make stronger assumptions on the operands.

\begin{defn}[Weakly Ultimately Infinite Function]
    \label{def:wUI-function}
    Let $ f:\mathbb{Q}^+ \to \mathbb{Q}^+ \cup \{+\infty\} $ be an UPP function. 
    If there exists $ T_f \in \mathbb{Q}^+ $ such that
    \begin{align*}
        f(t) &< +\infty, \qquad \forall t < T_f,\\
        f(t) &= +\infty, \qquad \forall t > T_f,
    \end{align*}
    then we call this function a \emph{weakly ultimately infinite} (wUI) one. 
\end{defn}

Weakly ultimately infinite functions are infinite starting from a finite abscissa. 
Typical cases in NC are the service curves of \emph{delay elements}.
Note that the above definition differs from the one of \emph{ultimately infinite} in \cite{bouillard2008algorithmic}: the latter requires $f$ to be infinite \emph{in} $T_f$ (i.e., it has a weak inequality in the lower branch), whereas we do not mandate anything in $T_f$. 
Therefore, every ultimately infinite function is wUI too, but not vice versa.
Our more general definition allows us to better discuss corner cases in \Cref{subsec:PseudoInverses-CornerCases}.

\section{Lower and Upper Pseudo-Inverse of UPP Functions}
\label{sec:PseudoInverses}

In this section, we discuss the lower and upper pseudo-inverse operators for UPP functions (henceforth, we will omit the \emph{lower} or \emph{upper} attribute when the discussion applies to both). 

First, we provide formal definitions.

\begin{defn}[Lower and Upper Pseudo-Inverse \cite{Liebeherr2017}]
    \label{def:PseudoInverses}
    Let $f:\mathbb{Q}\to\mathbb{Q}\cup\{-\infty\}\cup\{\infty\}$ be
    non-decreasing. Then the \emph{lower pseudo-inverse} is defined as
    \begin{equation*}
    \lpi{f}(y)\coloneqq\inf\left\{ x\mid f(x)\geq y\right\} =\sup\left\{ x\mid f(x)<y\right\} 
    \end{equation*}
    and the \emph{upper pseudo-inverse} is defined as
    \begin{equation*}
    \upi{f}(y)\coloneqq\sup\left\{ x\mid f(x)\leq y\right\} =\inf\left\{ x\mid f(x)>y\right\} .
    \end{equation*}
\end{defn}

Note that the lower pseudo-inverse is always left-continuous and the upper pseudo-inverse is right-continuous \cite[pp. 64]{Liebeherr2017}.
Moreover, we have in general that \cite[pp. 61]{Liebeherr2017}
\begin{equation*}
    \lpi{f}\leq \upi{f}.
\end{equation*}

The rest of this section is organized as follows.
In \Cref{subsec:PseudoInverses-Theorems} we show that the pseudo-inverse of an UPP function is still UPP, and provide expressions to compute its UPP parameters a priori.
In \Cref{subsec:PseudoInverses-Algorithm} we discuss, first through an visual example and then via pseudo-code, how to algorithmically compute the pseudo-inverse.
In \Cref{subsec:PseudoInverses-Conclusions} we conclude with a summary of the by-curve algorithm, some observations on the algorithmic complexity of this operator.
In \Cref{subsec:PseudoInverses-CornerCases} we discuss corner cases.

\subsection{Properties of pseudo-inverses of UPP functions}
\label{subsec:PseudoInverses-Theorems}

We discuss our properties for a generic function $f$, \emph{excluding} the cases of UC and wUI functions. 
These two cases are treated separately for ease of presentation.
At the end of this section, we report the necessary information for the alert reader to retrace the steps exposed hereafter to include these two corner cases. 
We remark that the Nancy software library \cite{Nancy22} computes pseudo-inverses of generic (non-decreasing) UPP functions, including UC and wUI ones.

\begin{thm}
\label{th:UppLowerPseudoInverse}
    Let $f$ be a non-decreasing UPP function that is neither UC nor wUI.
    Then, its lower pseudo-inverse $\lpi{f}(x)=\inf\left\{ t\mid f(t)\ge x\right\} $ is again UPP with  
    \begin{align}
        T_{\lpi{f}} & =f\left(T_{f}+d_{f}\right),\\
        d_{\lpi{f}} & =c_{f}\label{eq:lpi-d},\\
        c_{\lpi{f}} & =d_{f}\label{eq:lpi-c}.
    \end{align}
\end{thm}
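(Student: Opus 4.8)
The plan is to transfer the pseudo-periodicity of $f$ directly to $\lpi{f}$, using the fact that lower pseudo-inversion exchanges the horizontal and vertical axes, so that a period of length $d_f$ and height $c_f$ for $f$ ought to become a period of length $c_f$ and height $d_f$ for $\lpi{f}$. First I would record two preliminary observations that pin down exactly where the hypotheses ``not UC'' and ``not wUI'' enter. Since $f$ is non-decreasing and satisfies \eqref{eq:UPP-property}, $f(T_f+d_f)=f(T_f)+c_f$; if $f$ is not UC then $c_f\neq 0$, hence $c_f>0$ by monotonicity, and in fact $f(T_f+d_f)>f(T_f)$ strictly, since equality would force $c_f=0$ and then $f$ would be constant on $[T_f,\infty)$, i.e.\ UC. Since $f$ is not wUI, $y_0\coloneqq f(T_f+d_f)$ is finite; together with $c_f>0$ this also makes $f(T_f+kd_f)=f(T_f)+kc_f\to+\infty$, so that $\{t\mid f(t)\ge x\}$ is non-empty for every $x$ and $\lpi{f}$ is finite-valued everywhere of interest.

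The core step I would prove is the level-set identity
\[
  \{\,t\mid f(t)\ge x+c_f\,\}\;=\;d_f+\{\,t\mid f(t)\ge x\,\}\qquad\text{for all }x\ge y_0 .
\]
For ``$\subseteq$'': if $f(t)\ge x+c_f>y_0=f(T_f+d_f)$, monotonicity forces $t>T_f+d_f$, hence $t-d_f\ge T_f$, and \eqref{eq:UPP-property} gives $f(t)=f(t-d_f)+c_f$, whence $f(t-d_f)\ge x$. For ``$\supseteq$'': if $f(t)\ge x\ge y_0>f(T_f)$, monotonicity forces $t>T_f$, so \eqref{eq:UPP-property} gives $f(t+d_f)=f(t)+c_f\ge x+c_f$. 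Taking infima of both sides and using $\inf(d_f+A)=d_f+\inf A$ yields $\lpi{f}(x+c_f)=\lpi{f}(x)+d_f$ for every $x\ge y_0$. A short induction on $k$ — at step $k$ applying this to the abscissa $x+kc_f\ge y_0$ — then gives $\lpi{f}(x+kc_f)=\lpi{f}(x)+k\,d_f$ for all $x\ge y_0$ and all $k\in\mathbb{N}_0$, which is exactly \eqref{eq:UPP-property} for $\lpi{f}$ with $T_{\lpi{f}}=y_0=f(T_f+d_f)$, $d_{\lpi{f}}=c_f$, $c_{\lpi{f}}=d_f$. (That $\lpi{f}$ is also piecewise affine, as needed for a finite representation, follows separately from $f$ being piecewise affine and non-decreasing, pseudo-inversion mapping such functions to piecewise-affine ones.)

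The main obstacle I anticipate is the boundary bookkeeping, i.e.\ making the monotonicity deductions tight enough: in the ``$\subseteq$'' direction I need $f(t)\ge x+c_f$ to yield $t\ge T_f+d_f$ (so that $t-d_f$ lands in the region $[T_f,\infty)$ where \eqref{eq:UPP-property} applies), not merely $t\ge T_f$; and symmetrically, in the ``$\supseteq$'' direction I need $f(t)\ge x$ by itself to force $t\ge T_f$. Both rely on the \emph{strict} inequality $f(T_f+d_f)>f(T_f)$, which is precisely what breaks for UC functions, while finiteness of $y_0$ is what breaks for wUI ones — this is why those two classes are deferred to \Cref{subsec:PseudoInverses-CornerCases}. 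A secondary, routine point is to check that the argument does not depend on the particular valid UPP triple $(T_f,d_f,c_f)$ chosen for $f$, so that the triple produced for $\lpi{f}$ is valid for every such choice.
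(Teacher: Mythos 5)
Your proof is correct, and it reaches the conclusion by a slightly different route than the paper. The paper anchors the argument at a point: it fixes $t_{1}\geq T_{f}+d_{f}$, sets $x=f(t_{1})$ and $t_{0}=\lpi{f}(x)$, uses non-UC-ness to force $t_{0}>T_{f}$, and then pushes $k$ periods through the chain $\inf\{t\mid f(t)\geq f(t_{0})+kc_{f}\}=\inf\{t\mid f(t)\geq f(t_{0}+kd_{f})\}=t_{0}+kd_{f}$. You instead prove the level-set identity $\{t\mid f(t)\geq x+c_{f}\}=d_{f}+\{t\mid f(t)\geq x\}$ for every $x\geq f(T_{f}+d_{f})$ and take infima. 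The two arguments use the same ingredients (monotonicity to localize $t$ relative to $T_{f}$ or $T_{f}+d_{f}$, strictness of $c_{f}>0$ from non-UC-ness, finiteness from non-wUI-ness), but your packaging buys something concrete: it establishes the pseudo-periodicity relation for \emph{all} ordinates $x\geq T_{\lpi{f}}$, including those not attained by $f$ (e.g., ordinates skipped by a jump discontinuity), whereas the paper's proof as written only treats $x$ of the form $f(t_{1})$ and implicitly relies on $f(t_{0})=f(t_{1})$, i.e., on the infimum being attained. Your version also cleanly isolates where each hypothesis is used. The one point you correctly flag as separate --- that $\lpi{f}$ remains piecewise affine, so that a finite representation exists --- is likewise not argued in the paper's proof (it is deferred to the by-sequence algorithm and its case analysis), so no gap relative to the paper arises there.
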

    
\begin{proof}
    Let $t_{1}\geq T_{f}+d_{f}$ and $x\coloneqq f(t_{1})$. 
    Moreover, we define $t_{0}\coloneqq \lpi{f}(x)=\inf\left\{ t\mid f(t)\ge x\right\} =\inf\left\{ t\mid f(t)\ge f(t_{1})\right\} $.
    By definition, it is clear that $t_{0}\leq t_{1}$ ($t_{1}$ satisfies the condition in the infimum, and $t_{0}$ is the smallest to satisfy it). 
    Since $f$ is non-UC and we have by definition $t_{1}-d_{f}\geq T_{f},$ it follows that 
    \begin{equation*}
        f(T_{f})\leq f(t_{1}-d_{f}) \overset{\eqref{eq:UPP-property}}{=}f(t_{1})-c_{f} < f(t_{1})=f(t_{0}),
    \end{equation*}
    where we used in the strict inequality that $f$ is not UC and thus $t_{0} > T_{f}$. 
    Therefore, 
    \begin{align*}
        \lpi{f}\left(x+k\cdot d_{\lpi{f}}\right) & =\inf\left\{ t\mid f(t)\ge x+k\cdot d_{\lpi{f}}\right\} \\
        & \overset{\eqref{eq:lpi-d}}{=}\inf\left\{ t\mid f(t)\ge x+k\cdot c_{f}\right\} \\
        & =\inf\left\{ t\mid f(t)\ge f(t_{1})+k\cdot c_{f}\right\} \\
        & =\inf\left\{ t\mid f(t)\ge f(t_{0})+k\cdot c_{f}\right\} \\
        & =\inf\left\{ t\mid f(t)\ge f(t_{0}+k\cdot d_{f})\right\} \\
        & =t_{0}+k\cdot d_{f}\\
        & \overset{\eqref{eq:lpi-c}}{=}\lpi{f}(x)+k\cdot c_{\lpi{f}}.
    \end{align*}
\end{proof}

It follows from \Cref{th:UppLowerPseudoInverse} that, in order to compute a representation $R_{\lpi{f}}$, we need only to compute $S^{D'}_{\lpi{f}}$ where
\begin{equation*}
    D' = [0, T_{\lpi{f}} + d_{\lpi{f}}[ = [0, f(T_f + d_f) + c_f[ .
\end{equation*}
If there is no left-discontinuity in $T_f + 2 \cdot d_f$,
it follows that 
\begin{equation*}
    S^{D'}_{\lpi{f}} = \lpib{S^D_f},   
\end{equation*}
where
\begin{equation}
    \label{eq:UppLowerPseudoInverseD}
    D = \left[ 0, T_f + 2 \cdot d_f \right[
\end{equation}
Otherwise, let $x_1 = f \left( ( T_f + 2 \cdot d_f)^- \right)$ and $x_2 = f \left( T_f + 2 \cdot d_f \right)$, then $x_1 < x_2$, and therefore $S^{D'}_{\lpi{f}}$ must end with a constant segment defined in $]x_1, x_2[$ with ordinate $T_f + 2 \cdot d_f$.
Such segment must be added manually at the end of $\lpib{S^D_f}$. 
% How to perform this by-sequence computation is discussed in \Cref{subsec:PseudoInverses-Algorithm}. 

A similar result can be derived for the upper pseudo-inverse.

\begin{thm}
\label{th:UppUpperPseudoInverse}
    Let $f$ be a non-decreasing UPP function that is not UC. 
    Then, the upper pseudo-inverse $\upi{f}(x)=\sup\left\{ t\mid f(t)\leq x\right\} $ is again UPP with
    \begin{align}
        T_{\upi{f}} & =f\left(T_{f}\right)\\
        d_{\upi{f}} & =c_{f}\label{eq:upi-d}\\
        c_{\upi{f}} & =d_{f}\label{eq:upi-c}
    \end{align}
\end{thm}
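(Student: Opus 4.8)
The plan is to mirror the proof of \Cref{th:UppLowerPseudoInverse} as closely as possible, exploiting the near-symmetry between the two pseudo-inverses. I would start by fixing $t_1 \geq T_f$, setting $x \coloneqq f(t_1)$, and defining $t_0 \coloneqq \upi{f}(x) = \sup\{t \mid f(t) \leq f(t_1)\}$. Since $t_1$ itself satisfies the defining condition $f(t_1) \leq f(t_1)$, we immediately get $t_0 \geq t_1 \geq T_f$, which places $t_0$ in the pseudo-periodic region and lets us apply \eqref{eq:UPP-property} to $f$ at $t_0$.

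The heart of the argument is then the chain of equalities computing $\upi{f}(x + k\cdot d_{\upi{f}})$: using \eqref{eq:upi-d} to rewrite $d_{\upi{f}} = c_f$, then replacing $x = f(t_1)$ and (crucially) $f(t_1) + k\cdot c_f = f(t_0) + k\cdot c_f$ — this last step is where I need to be careful, since it requires that $\sup\{t \mid f(t)\leq f(t_1)\}$ equals $\sup\{t \mid f(t) \leq f(t_0)\}$, i.e. that shifting the threshold from $f(t_1)$ up to $f(t_0)$ does not change the supremum. This holds because $f$ non-decreasing means every $t \leq t_0$ with $f(t) \leq f(t_1)$ still satisfies $f(t) \leq f(t_0)$, and conversely the definition of $t_0$ as a supremum over $\{f \leq f(t_1)\}$ controls the other direction; one should spell out that $f(t_0) = f(t_1)$ need not hold (there could be a jump at $t_0$), but the two sup-sets still have the same supremum. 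Then $f(t_0) + k\cdot c_f = f(t_0 + k\cdot d_f)$ by the UPP property, and $\sup\{t \mid f(t) \leq f(t_0 + k\cdot d_f)\} = t_0 + k\cdot d_f$, again using that the supremum is unchanged when the threshold is taken at the point $t_0 + k\cdot d_f$ itself. Finally \eqref{eq:upi-c} gives $t_0 + k\cdot d_f = \upi{f}(x) + k\cdot c_{\upi{f}}$, closing the induction.

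The place where the non-UC hypothesis enters, and the step I expect to be the main obstacle, is justifying that $\sup\{t \mid f(t) \leq f(t_1)\}$ is \emph{finite} and genuinely lands in the pseudo-periodic region rather than being $+\infty$ — if $f$ were UC with value $f(t_1)$, the set $\{t \mid f(t) \leq f(t_1)\}$ would be a half-line and $t_0 = +\infty$, so the formulas would be vacuous. Since $f$ is UPP with $c_f > 0$ (which is forced once $f$ is not UC — here I would note that $c_f = 0$ together with the UPP property and monotonicity would make $f$ ultimately constant), we have $f(t_0 + d_f) = f(t_0) + c_f > f(t_0) \geq f(t_1)$ for any candidate $t_0 \geq T_f$, so the sup-set is bounded and $t_0 < +\infty$; this is the analogue of the strict-inequality step in the previous proof. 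I would also remark that, unlike the lower pseudo-inverse case, no wUI exclusion is needed here because the upper pseudo-inverse of a wUI function behaves well (it is the constant $T_f$-type jump), and indeed the theorem statement only excludes UC — so I should either confirm this in passing or simply note that the same computation goes through verbatim since wUI functions still satisfy the UPP property with finite parameters on the relevant region. The remaining details — checking $T_{\upi{f}} = f(T_f)$ is a valid pseudo-periodic start, i.e. that the relation holds for all $x \geq f(T_f)$ and not just those of the form $f(t_1)$ — are handled exactly as in \Cref{th:UppLowerPseudoInverse}: every $x \geq f(T_f)$ lies in some interval $[f(t_1^-), f(t_1^+)]$ or is a value $f(t_1)$, and the left-continuity/right-continuity remarks after \Cref{def:PseudoInverses} let one reduce the general $x$ to these representative cases.
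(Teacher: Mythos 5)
Your proposal is, up to swapping the labels $t_0$ and $t_1$, the same argument the paper gives: fix a point in the pseudo-periodic region, let the other point be the supremum defining $\upi{f}$, use the non-UC hypothesis (hence $c_f>0$) to keep that supremum finite, and push $k\cdot c_f$ through the supremum via the UPP property of $f$. The skeleton, the role of non-UC, and the closing remark about extending from $x=f(t_1)$ to all $x\geq f(T_f)$ all match (the last point is in fact treated more explicitly by you than by the paper).

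The one place where you elaborate beyond the paper -- and where the justification you offer is actually false -- is the claim that $\sup\left\{ t\mid f(t)\leq f(t_1)\right\}$ and $\sup\left\{ t\mid f(t)\leq f(t_0)\right\}$ coincide when $t_0$ denotes the former supremum and $f$ jumps at $t_0$. Take $f(t)=\lfloor t\rfloor$ (UPP with $T_f=0$, $d_f=c_f=1$, not UC) and $t_1=1/2$: then $t_0=\sup\left\{ t\mid f(t)\leq 0\right\}=1$, $f(t_0)=1$, and $\sup\left\{ t\mid f(t)\leq 1\right\}=2\neq t_0$. Monotonicity gives one inclusion of the two sets, but the definition of $t_0$ does \emph{not} control the other: $\left\{ t\mid f(t)\leq f(t_0)\right\}$ extends past $t_0$ whenever $f$ is constant at level $f(t_0)$ just after the jump. (The paper's own chain silently writes $f(t_0)=f(t_1)$ at the analogous step, so it shares this imprecision; your instinct that this is the delicate step was right, but the proposed patch does not work.) The robust fix is to never evaluate $f$ at the supremum point: for $t\geq T_f+k\,d_f$ the UPP property gives $f(t)\leq f(t_1)+k\,c_f$ if and only if $f(t-k\,d_f)\leq f(t_1)$, so the portion of the sup-set above $T_f+k\,d_f$ is exactly the translate by $k\,d_f$ of $\left\{ t\geq T_f\mid f(t)\leq f(t_1)\right\}$, whose supremum is $t_0$; since $t_1+k\,d_f$ belongs to the set, the overall supremum is $t_0+k\,d_f$, which is what the chain needs. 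With that substitution your argument closes correctly.
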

    
\begin{proof}
    The proof follows the same steps as the one for the lower pseudo-inverse. 
    Let $t_{0}\geq T_{f}$ and $x\coloneqq f(t_{0})$. 
    Moreover, we define $t_{1}\coloneqq \upi{f}(x)=\sup\left\{ t\mid f(t)\leq x\right\} =\sup\left\{ t\mid f(t)\leq f(t_{0})\right\} $.
    By definition, it is clear that $t_{0}\leq t_{1}$ ($t_{0}$ satisfies the condition in the supremum, and $t_{1}$ is the largest to satisfy it). 
    Since $f$ is non-UC and we have by definition $t_{0}\geq T_{f},$ it follows that 
    \begin{equation*}
        f(t_{0}+d_{f}) \overset{\eqref{eq:UPP-property}}{=}f(t_{0})+c_{f} > f(t_{0})=f(t_{1}).
    \end{equation*}
    where we used in the strict inequality that $f$ is not ultimately constant and thus $t_{1}<t_{0}+d_{f}<\infty$. 
    Therefore, 
    \begin{align*}
        \upi{f}\left(x+k\cdot d_{\upi{f}}\right) & =\sup\left\{ t\mid f(t)\leq x+k\cdot d_{\upi{f}}\right\} \\
        & \overset{\eqref{eq:upi-d}}{=}\sup\left\{ t\mid f(t)\leq x+k\cdot c_{f}\right\} \\
        & =\sup\left\{ t\mid f(t)\leq f(t_{0})+k\cdot c_{f}\right\} \\
        & =\sup\left\{ t\mid f(t)\leq f(t_{1})+k\cdot c_{f}\right\} \\
        & =\sup\left\{ t\mid f(t)\leq f(t_{1}+k\cdot d_{f})\right\} \\
        & =t_{1}+k\cdot d_{f}\\
        & \overset{\eqref{eq:upi-c}}{=}\upi{f}(x)+k\cdot c_{\upi{f}}.
    \end{align*}
\end{proof}

Similar to the previous theorem, it follows from \Cref{th:UppUpperPseudoInverse} that, in order to compute a representation $R_{\upi{f}}$, we need only to compute $S^{D'}_{\upi{f}}$, where 
\begin{equation*}
    D' = [0, T_{\upi{f}} + d_{\upi{f}}[ = [0, f(T_f) + c_f[.
\end{equation*}
If there is no left-discontinuity in $T_f + d_f$,
it follows that 
\begin{equation*}
    S^{D'}_{\upi{f}} = \upib{S^D_f},
\end{equation*}
where 
\begin{equation*}
    D = [0, T_f + d_f[.
\end{equation*}
Otherwise, let $x_1 = f \left( ( T_f + d_f)^- \right)$ and $x_2 = f \left( T_f + d_f \right)$, then $x_1 < x_2$, and therefore $S^{D'}_{\upi{f}}$ must end with a constant segment defined in $]x_1, x_2[$ with ordinate $T_f + d_f$.
Such segment must be added manually at the end of $\upib{S^D_f}$. 
% This paragraph is here to explain in accessible terms what the proofs *mean*, in case the math looks too obscure
The alert reader will notice that $T_{\lpi{f}}$ and $T_{\upi{f}}$ differ, for which we can provide the following intuitive explanation.
Consider a function $f$ so that $f(t) = k, \forall t \in ]a, T + b[$ with $a < T, b > 0$. 
Then $\lpi{f}(k) = a$, as the lower pseudo-inverse ``goes backwards'' to the start of the constant segment.
However, since $a < T$, the pseudo-periodic property does not apply for $f(a)$, i.e., we cannot say anything about $f(a + d)$.
So, in general, we cannot say $\lpi{f}$ is pseudo-periodic from $f(T)$, and we instead need to ``skip'' to the second pseudo-period so that, as in the proof, $T < a < T + d$. \\
The same does not apply for $\upi{f}$, however, since $\upi{f}(k) = T + b$ as the upper pseudo-inverse ``goes forward'' to the end of the constant segment and $T + b > T$, thus we can rely on the pseudo-periodic property of $f$. 

An interesting consequence of this discussion is that the representation $R_f$ may change when we do not expect it to.
From \cite[pp.~64]{Liebeherr2017}, \cite[p.~48]{Bouillard2018DNC}, we know the following properties:
% \paul{I swapped the citations. Moreover, in \cite{Bouillard2018DNC}, I only found this property on page~48 without a proof}
\begin{itemize}
    \item if $f$ is left-continuous, $\lpib{\lpi{f}} = \lpib{\upi{f}} = f$,
    \item if $f$ is right-continuous, $\upib{\upi{f}} = \upib{\lpi{f}} = f$.
\end{itemize}
Thus, one may expect that applying the pseudo-inverse twice would lead to a function with the \emph{same} representation, i.e., that $\lpib{\lpib{R_f}} = R_f$.
However, as per the discussion above, the start of the pseudo-period of the result would go from an initial $T_f$ to $T_f + d_f + c_f$. 
This is unavoidable -- the above example shows that there exists one case when $T_f$ would \emph{not} be the correct starting point. 
However, in other cases, $T_f$ would be the correct starting point for the pseudo-periodic behavior.

This is an instance of a general issue encountered with algorithms for UPP curves -- also discussed in \cite{ZippoEtAlii2022}. Generic algorithms, that make no assumptions on the shape of the operands (such as the ones presented here for the pseudo-inverse), may in general yield \emph{non-minimal representations} of the result. 
Generally speaking, minimal representations are preferable, since the number of elements in a sequence affects the complexity of the algorithms. 
However, addressing the issue of representation minimization \emph{a priori} when implementing NC operators is too hard (if doable at all), since one would need to make a comprehensive list of subcases, and, of course, as many formal proofs of correctness.
It is instead considerably more efficient to devise a generic algorithm for an operator, neglecting minimization, and then use a simple algorithm \emph{a posteriori} that minimizes the representation of the result -- see \cite{ZippoEtAlii2022}.

\subsection{By-sequence algorithm for pseudo-inverses}
\label{subsec:PseudoInverses-Algorithm}

In this section we discuss the by-sequence algorithms for pseudo-inverses.
We recall that with "by-sequence" we mean that the operand, and thus its result, is defined on a limited domain.
Without loss of generality, we will focus on a sequence $S$ representing a function $f$ over an interval $\left[0, t\right[$, with $f(0) = 0$.
Then $\lpi{S}$ is the sequence representing $\lpi{f}$ over the interval $\left[0, f(t^-)\right[$. The same applies to $\upi{S}$.

%For the sake of conciseness, in this section we will give an intuitive overview of the behavior of pseudo-inverses of a sequence, at the cost of mathematical accuracy.
%The rigorous discussion supporting this overview, which covers all the edge cases, is presented in \Cref{app-sec:PseudoInverses-Formalization}, which we also reference in \Cref{subsec:PseudoInverses-Example} to show the relation between the algorithmic steps and the equations from this rigorous discussion.

The simplest case is when $S$ is continuous and strictly increasing, hence bijective. 
In this case, both $\lpi{S}$ and $\upi{S}$ are the classic \emph{inverse} of $S$, and the algorithm consists of drawing, for each point and segment of $S$, its reflection over the line $y = x$.
However, when $S$ includes discontinuities and/or constant segments, the algorithm becomes slighlty more complicated: a discontinuity in $S$ ``maps'' to a constant segment in both $\lpi{S}$ and $\upi{S}$, while a constant segment in $S$ ``maps'' to a right-discontinuity in $\lpi{S}$ and a left-discontinuity in $\upi{S}$.
This is exemplified in \Cref{fig:lpi-sequence-example}.

\begin{figure}[!t]
    \centering
    \begin{subfigure}{0.45\linewidth}
        % \tikzsetnextfilename{lpi-sequence-example-f}
        % \includegraphics[width=\linewidth]{figures/lpi-sequence-example-f}
        \includegraphics[width=\linewidth]{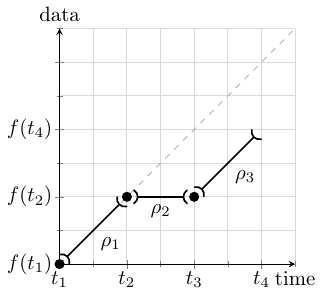}
        \caption{$S$}
        \label{fig:lpi-sequence-example.f}
    \end{subfigure}
    \begin{subfigure}{0.45\linewidth}
        % \tikzsetnextfilename{lpi-sequence-example-lpi-f}
        % \includegraphics[width=\linewidth]{figures/lpi-sequence-example-lpi-f}
        \includegraphics[width=\linewidth]{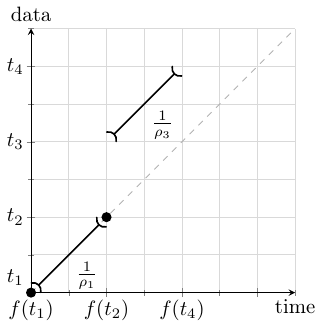}
        \caption{$\lpi{S}$}
        \label{fig:lpi-sequence-example.lpi}
    \end{subfigure}
    \begin{subfigure}{0.45\linewidth}
        % \tikzsetnextfilename{lpi-sequence-example-f}
        % \includegraphics[width=\linewidth]{figures/lpi-sequence-example-f}
        \includegraphics[width=\linewidth]{tikzcache/lpi-sequence-example-f}
        \caption{$\lpib{\lpi{S}}$}
        \label{fig:lpi-sequence-example.lpi-of-lpi}
    \end{subfigure}
    \caption{Example of lower pseudo-inverse of a sequence $S$. 
        Since $S$ is left-continuous, $S = \lpib{\lpi{S}}$}
    \label{fig:lpi-sequence-example}
\end{figure}

We describe \Cref{alg:lower-pseudo-code-seq} for the lower pseudo-inverse (the one for the upper pseudo-inverse differs in few details which we briefly discuss later). 
\Cref{alg:lower-pseudo-code-seq} linearly scans $S$ considering one elements at a time. 
Based on the type of element (point or segment), as well as on its topological relationship with its predecessor, it decides what to add to $\lpi{S}$. 

More in detail, there are eight possible cases, shown in \Cref{table:cases}, which require zero, one, two, or three elements to be added to $\lpi{S}$. 
These are reported in the same order in \Cref{alg:lower-pseudo-code-seq}. 
The rigorous (though cumbersome) mathematical justification for each case is instead postponed to \Cref{app-sec:PseudoInverses-Formalization} for the benefit of the interested reader.

\begin{table}[ht]
    \caption{Cases to be considered in the by-sequence algorithm to compute $\lpi{S}$}
    \label{table:cases}
    \centering
    \small
    \begin{tabular}{| p{0.15\linewidth} | p{0.15\linewidth} | p{0.15\linewidth} | p{0.15\linewidth} | p{0.15\linewidth} | p{0.15\linewidth} |}
        \hline
        \thead{Considered\\ Element} & \thead{Constant\\ segment} & \thead{Discontinuity\\ in $S$} & \thead{Example of $S$\\\\} & \thead{Append to \\ $\lpi{S}$} & \thead{Case \#} \\
        \hline
        % Case 1
        \multirow{4}{*}{\makecell{\textbf{Point}\\ after segment}} & \multirow{2}{*}{Yes} & 
        Yes &
        % \tikzsetnextfilename{lpi-case-1-in}
        % \includegraphics[width=\linewidth]{figures/lpi-case-1-in} & % S
        \includegraphics[width=\linewidth]{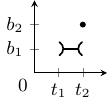} & % S
        % \tikzsetnextfilename{lpi-case-1-out}
        % \includegraphics[width=\linewidth]{figures/lpi-case-1-out} & % S^{-1}
        \includegraphics[width=\linewidth]{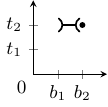} & % S^{-1}
        c1 \\
        \cline{3-6}
        % Case 2
        & &
        No &
        % \tikzsetnextfilename{lpi-case-2-in}
        % \includegraphics[width=\linewidth]{figures/lpi-case-2-in} & % S
        \includegraphics[width=\linewidth]{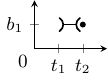} & % S
        \makecell{\emph{nothing}\\\emph{to append}} & % S^{-1}
        c2 \\
        \cline{2-6}
        % Case 3
        & \multirow{2}{*}{No} & 
        Yes &
        % \tikzsetnextfilename{lpi-case-3-in}
        % \includegraphics[width=\linewidth]{figures/lpi-case-3-in} & % S
        \includegraphics[width=\linewidth]{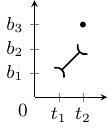} & % S
        % \tikzsetnextfilename{lpi-case-3-out}
        % \includegraphics[width=\linewidth]{figures/lpi-case-3-out} & % S^{-1}
        \includegraphics[width=\linewidth]{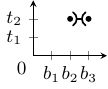} & % S^{-1}
        c3 \\
        \cline{3-6}
        % Case 4
        & &
        No &
        % \tikzsetnextfilename{lpi-case-4-in}
        % \includegraphics[width=\linewidth]{figures/lpi-case-4-in} & % S
        \includegraphics[width=\linewidth]{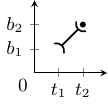} & % S
        % \tikzsetnextfilename{lpi-case-4-out}
        % \includegraphics[width=\linewidth]{figures/lpi-case-4-out} & % S^{-1}
        \includegraphics[width=\linewidth]{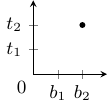} & % S^{-1}
        c4 \\
        \hline
        % Case 5
        \multirow{4}{*}{\makecell{\textbf{Segment}\\ after point}} & \multirow{2}{*}{Yes} & 
        Yes &
        % \tikzsetnextfilename{lpi-case-5-in}
        % \includegraphics[width=\linewidth]{figures/lpi-case-5-in} & % S
        \includegraphics[width=\linewidth]{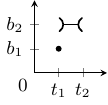} & % S
        % \tikzsetnextfilename{lpi-case-5-out}
        % \includegraphics[width=\linewidth]{figures/lpi-case-5-out} & % S^{-1}
        \includegraphics[width=\linewidth]{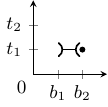} & % S^{-1}
        c5 \\
        \cline{3-6}
        % Case 6
        & &
        No &
        % \tikzsetnextfilename{lpi-case-6-in}
        % \includegraphics[width=\linewidth]{figures/lpi-case-6-in} & % S
        \includegraphics[width=\linewidth]{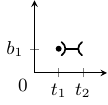} & % S
        \makecell{\emph{nothing}\\\emph{to append}} & % S^{-1}
        c6 \\
        \cline{2-6}
        % Case 7
        & \multirow{2}{*}{No} & 
        Yes &
        % \tikzsetnextfilename{lpi-case-7-in}
        % \includegraphics[width=\linewidth]{figures/lpi-case-7-in} & % S
        \includegraphics[width=\linewidth]{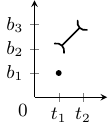} & % S
        % \tikzsetnextfilename{lpi-case-7-out}
        % \includegraphics[width=\linewidth]{figures/lpi-case-7-out} & % S^{-1}
        \includegraphics[width=\linewidth]{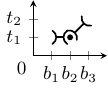} & % S^{-1}
        c7 \\
        \cline{3-6}
        % Case 8
        & &
        No &
        % \tikzsetnextfilename{lpi-case-8-in}
        % \includegraphics[width=\linewidth]{figures/lpi-case-8-in} & % S
        \includegraphics[width=\linewidth]{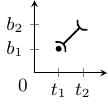} & % S
        % \tikzsetnextfilename{lpi-case-8-out}
        % \includegraphics[width=\linewidth]{figures/lpi-case-8-out} & % S^{-1}
        \includegraphics[width=\linewidth]{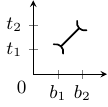} & % S^{-1}
        c8 \\
        \hline
    \end{tabular}
\end{table}

\begin{algorithm}[ht]
    \caption{Pseudo-code for lower pseudo-inverse of a finite sequence} 
    \label{alg:lower-pseudo-code-seq}
    \hspace*{\algorithmicindent} \textbf{Input} Finite sequence of elements $ S $, consisting of $ e_k, k \in \left\{1, \dots, n\right\} $ that is either a point or a segment.
    Moreover, $ e_1 $ is a point at the origin $ (0, 0). $\\
    \hspace*{\algorithmicindent} \textbf{Return} Lower pseudo-inverse $ \lpi{S} $ of $ S $, consisting of a sequence of elements $ O = \left\{o_1, \dots, o_m\right\} $
    \begin{algorithmic}[1]
        \State Define an empty sequence of elements $ O \coloneqq \{\,\} $
        \State Append $ p \coloneqq (0,0) $ to $ O $ \Comment{$ \lpi{f}(e_0) $}
        \For{$ e_k $ \textbf{in} $ \left(e_2, \dots, e_n\right) $}
        \Comment{The for loop starts after the origin}
            \If{$ e_k == p_i $} \Comment{The element is a point}
            \label{line:lpi-sp}
                \State $ e_{k-1} $ is a segment $ s_{i-1} $
                \If{$ s_{i-1} $ is constant}
                \label{line:lpi-sp-const}
                    \If{$ f(t_i^-) < f(t_i) $} \Comment{$f$ is not left-cont. at $ t_i $}
                    \label{line:lpi-sp-const-jump}
                    \State Append $ s \coloneqq \left(f(t_{i}^{-}), f(t_{i}), t_i, t_i\right) $ to $ O $ \Comment{(c1)}
                    \State Append $ p \coloneqq \left(f(t_i), t_i\right) $ to $ O $
                \Else \Comment{$f$ is left-cont. at $ t_i $}
                \label{line:lpi-sp-const-no-jump}
                    \State Nothing to append	\Comment{(c2)}
                \EndIf
            \Else \Comment{$ s_{i-1} $ is not constant}
            \label{line:lpi-sp-inc}
                \If{$ f(t_i^-) < f(t_i) $} \Comment{$f$ is not left-cont. at $ t_i $}
                \label{line:lpi-sp-inc-jump}
                    \State Append $ p \coloneqq \left(f(t_i^-), t_i\right) $ to $ O $ \Comment{(c3)}
                    \State Append $ s \coloneqq \left(f(t_{i}^{-}), f(t_{i}), t_i, t_i\right) $ to $ O $
                    \State Append $ p \coloneqq \left(f(t_i), t_i\right) $ to $ O $
                \Else \Comment{$f$ is left-cont. at $ t_i $}
                \label{line:lpi-sp-inc-no-jump}
                    \State Append $ p \coloneqq \left(f(t_i), t_i\right) $ to $ O $ \Comment{(c4)}
                \EndIf
            \EndIf
        \Else \Comment{The element is a segment $ s_i $}
        \label{line:lpi-ps}
        \State $ e_{k-1} $ is a point $ p_{i} $
            \If{$ e_k = s_{i} $ is constant}
            \label{line:lpi-ps-const}
                \If{$ f(t_i) < f(t_i^+) $} \Comment{$f$ is not right-cont. at $ t_i $}
                \label{line:lpi-ps-const-jump}
                    \State Append $ s \coloneqq \left(f(t_{i}), f(t_{i}^+), t_i, t_i\right) $ to $ O $ \Comment{(c5)}
                    \State Append $ p \coloneqq \left(f(t_i^+), t_i\right) $ to $ O $
                \Else \Comment{$f$ is right-cont. at $ t_i $}
                \label{line:lpi-ps-const-no-jump}
                    \State Nothing to append \Comment{(c6)}
                \EndIf
            \Else \Comment{$ s_{i} $ is not constant}
            \label{line:lpi-ps-inc}
                \If{$ f(t_i) < f(t_i^+) $} \Comment{$f$ is not right-cont. at $ t_i $}
                \label{line:lpi-ps-inc-jump}
                    \State Append $ s \coloneqq \left(f(t_{i}), f(t_{i}^+), t_i, t_i\right) $ to $ O $ \Comment{(c7)}
                    \State Append $ p \coloneqq \left(f(t_i^+), t_i\right) $ to $ O $
                    \State Append $ s \coloneqq \left(f(t_i^+), f(t_{i+1}^-), t_i, t_{i+1}\right) $ to $ O $
                \Else \Comment{$f$ is right-cont. at $ t_i $}
                \label{line:lpi-ps-inc-no-jump}
                    \State Append $ s \coloneqq \left(f(t_i^+), f(t_{i+1}^-), t_i, t_{i+1}\right) $ to $ O $ \Comment{(c8)}
                                        \EndIf
                \EndIf
            \EndIf
        \EndFor
    \end{algorithmic}
\end{algorithm}

We exemplify the above algorithm with reference to the example in \Cref{fig:lpi-sequence-example}.
For each of the considered steps, we will reference the case in \Cref{table:cases}, the line of \Cref{alg:lower-pseudo-code-seq}, and the relevant equations from \Cref{app-sec:PseudoInverses-Formalization} proving the result.
Processing each element from left to right, we calculate:
\begin{itemize}
    \item The origin $ (t_1, f(t_1)) = (0, 0) $ for $ \lpi{f}(0) $
    \item for the segment $ s_1 $ and its predecessor point $ p_1 = (t_1, f(t_1)) $:
    this corresponds to \Cref{line:lpi-ps} of the algorithm.
    Since $ s_1 $ has a positive slope, we continue in \Cref{line:lpi-ps-inc}.
    As the function is right-continuous at $ t_1 $, we are in case c8.
    We go to \Cref{line:lpi-ps-inc-no-jump} and add a segment $ s = \left(f(t_1^+), f(t_{2}^-), t_1, t_{2}\right) $ to $ O $.
    It can be verified that this follows \Cref{eq:lpi-ps-inc-no-jump}.

    \item for the point $p_2 = \left(t_2, f(t_2)\right)$ and its preceding segment $ s_1, $ we are in case c4, corresponding to \Cref{line:lpi-sp-inc-no-jump} and we therefore append the point $ p \coloneqq \left(f(t_2), t_2\right) $ to $ O $.
    It can be verified that this follows \Cref{eq:lpi-sp-inc-no-jump}.

    \item for the constant segment $ s_2 $ with preceding point $p_2 =  \left(t_2, f(t_2)\right)$, we are in case c6, corresponding to \Cref{line:lpi-ps-const-no-jump}, and no element is added.
    This follows \Cref{eq:lpi-ps-const-no-jump}. 

    \item for the point $ p_3 = \left(t_3, f(t_3)\right) $ with preceding constant segment $ s_2 $, we are in case c2, corresponding to \Cref{line:lpi-sp-const-no-jump}, and no element is added.
    This follows \Cref{eq:lpi-sp-const-no-jump}.

    \item for a segment $ s_3 $ with preceding point $\left(t_3, f(t_3)\right)$, we are in case c8, \Cref{line:lpi-ps-inc-no-jump}, and append $ s \coloneqq \left(f(t_3^+), f(t_{4}^-), t_3, t_{4}\right) $ to $ O $ (verified in \Cref{eq:lpi-ps-inc-no-jump}).
\end{itemize}

%Merging the result gives
%\begin{equation*}
%\lpi{f}(y)=\begin{cases}
%	\inf\left\{ x\mid b_{1}+\rho_1x\geq y\right\} =\frac{y-b_{1}}{\rho_1}, & \text{if }f(t_1)<y<f(t_2^{-})\\
%	\inf\left\{ x\mid f(x)\geq y\right\} =t_2, & \text{if }y=f(t_2)\\
%	\inf\left\{ x\mid b_{2}+\rho_{2}x\geq y\right\} =\frac{y-b_{2}}{\rho_{2}}, & \text{if }f(t_3)<y<f(t_4).
%\end{cases}
%\end{equation*}

We note that, since $\lpi{S}$ is left-continuous, when a continuous sequence of a point, a constant segment, and a point is encountered in $S$, they all ``map'' to the inverse of the first (leftmost) point of this sequence.
This justifies the fact that nothing has to be added to $\lpi{S}$ in these cases (e.g., 2 and 6).
%Thus, in this algorithm, when a constant segment is encountered we assume that the such point was already appended to the output.
%The same does not apply to $\upi{S}$, which is right-continuos, thus the same construct ``maps'' to the inverse of the {\em last} point, instead.

The algorithm for $\upi{S}$, that we omit here for brevity, differs from the one provided only in how constant segments are handled, that is, by appending the inverse of the \emph{last} (rightmost) point instead of the first (recall that the upper pseudo-inverse is right-continuous). 
This requires the algorithm for $\upi{S}$ to look ahead to the next element during the linear scan. 
We leave the (tedious, but simple) task of spelling out the minutiae of this algorithm to the interested reader. 

\FloatBarrier

\subsection{By-curve algorithm for pseudo-inverses}
\label{subsec:PseudoInverses-Conclusions}

We can now discuss the by-curve algorithm by combining the results presented in \Cref{subsec:PseudoInverses-Theorems,subsec:PseudoInverses-Algorithm}.
In \Cref{alg:lower-pseudo-code-curve}, we show the pseudo-code to compute $\lpi{f}$ for a UPP function $f$.
The analogous for upper pseudo-inverse, which we omit for brevity, can be similarly derived from the results in the sections above.

\begin{algorithm}[ht]
    \caption{Pseudo-code for lower pseudo-inverse of a UPP function}
    \label{alg:lower-pseudo-code-curve}
    \hspace*{\algorithmicindent} \textbf{Input} Representation $R_f$ of a non-decreasing UPP function $f$, consisting of sequence $S_f$ and parameters $T_f$, $d_f$ and $c_f$. \\
    \hspace*{\algorithmicindent} \textbf{Return} Representation $R_{\lpi{f}}$ of $\lpi{f}$
    \begin{algorithmic}[1]
        \State Compute the UPP parameters for the result \Comment{\Cref{th:UppLowerPseudoInverse}}
            \State\hspace{\algorithmicindent} $T_{\lpi{f}} \gets f\left(T_{f}+d_{f}\right)$
            \State\hspace{\algorithmicindent} $d_{\lpi{f}} \gets c_{f}$
            \State\hspace{\algorithmicindent} $c_{\lpi{f}} \gets d_{f}$
        \State Compute $S^D_f$ \Comment{\Cref{eq:UppLowerPseudoInverseD}}
            \State\hspace{\algorithmicindent} $D \gets [0, T_f + 2 \cdot d_f[$
            \State\hspace{\algorithmicindent} $S^D_f \gets \text{Cut}(R_f, D)$
        \State Compute $S_{\lpi{f}} \gets \lpib{S^D_f}$ \Comment{\Cref{alg:lower-pseudo-code-seq}}
        \State $R_{\lpi{f}} \gets \left(S_{\lpi{f}}, T_{\lpi{f}}, d_{\lpi{f}}, c_{\lpi{f}}\right)$
    \end{algorithmic}
\end{algorithm}

Regarding the complexity of \Cref{alg:lower-pseudo-code-curve}, we note that the main cost is computing $\lpib{S^D_f}$.
Since \Cref{alg:lower-pseudo-code-seq} is a linear scan of the input sequence, the resulting complexity is $\mathcal{O}\left(\card{S^D_f}\right)$.

\subsection{Corner cases: UC and wUI functions}
\label{subsec:PseudoInverses-CornerCases}

We conclude this section by discussing the two corner cases that we had initially left out, i.e., those when $f$ is either UC or wUI.
For these, some mathematical inconsistencies need be resolved first. For example:
\begin{itemize}
    \item if $f$ was UC, the current algorithm would yield $d_{\lpi{f}} = 0$,
    \item if $f$ was wUI, it would yield $T_{\lpi{f}} = +\infty$.
\end{itemize}

We treat these two cases in the following propositions.

\begin{prop}
    Let $ f:\mathbb{Q}^+ \to \mathbb{Q}^+ $ be a non-decreasing, UC function with $ T_f \in \mathbb{Q}^+. $
    Then, its lower pseudo-inverse $ \lpi{f}(y) $ is for $ y \geq f(T_f) $
    \begin{equation*}
        \lpi{f}(y) = \begin{cases}
            \inf\left\{ x\mid f(x)\geq \overbrace{y}^{=f(T_f)}\right\} = T_f, & \text{if }  y = f(T_f),\\
            \sup\left\{ x\mid f(x) < y\right\} =  +\infty, & \text{if }  y > f(T_f)
        \end{cases}
    \end{equation*}
    and its upper pseudo-inverse $ \upi{f}(y) $ is for $ y \geq f(T_f) $
    \begin{equation*}
        \upi{f}(y) = \sup\left\{ x\mid f(x) \leq y\right\} = +\infty.
    \end{equation*}
\end{prop}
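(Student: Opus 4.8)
The plan is to handle the three claims separately, in each case simply unwinding the definition of the pseudo-inverse from \Cref{def:PseudoInverses} and evaluating the infimum or supremum of the relevant super-/sub-level set of $f$, using nothing more than monotonicity of $f$ and the defining property of a UC function. First I would record the elementary facts to be used: being UC means $f$ is ultimately affine with zero slope, i.e.\ $f(t)=f(T_f)$ for all $t\ge T_f$; together with $f$ non-decreasing this yields $\sup_{t\ge 0}f(t)=f(T_f)$ and $f(x)\le f(T_f)$ for every $x\in\mathbb{Q}^+$. I would also read $T_f$ as the length of the \emph{true} initial transient (the minimal valid choice), so that $f(x)<f(T_f)$ for all $x<T_f$; otherwise the symbol $T_f$ in the statement would have to be replaced by the smaller abscissa at which $f$ first attains its ultimate value, since $\lpi{f}$ and $\upi{f}$ depend only on $f$, not on the chosen representation.

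For the lower pseudo-inverse I would argue as follows. When $y=f(T_f)$, use the form $\lpi{f}(y)=\inf\{x\mid f(x)\ge y\}$: by the facts above, $f(x)\ge f(T_f)$ holds exactly for $x\ge T_f$, so this set is $[T_f,\infty)\cap\mathbb{Q}^+$ and its infimum is $T_f$ (the dual form $\sup\{x\mid f(x)<y\}=\sup([0,T_f)\cap\mathbb{Q}^+)=T_f$ gives the same value). When $y>f(T_f)$, the set $\{x\mid f(x)\ge y\}$ is empty, since $f(x)\le f(T_f)<y$ for every $x$; hence its infimum is $+\infty$ by the usual convention, and equivalently $\{x\mid f(x)<y\}=\mathbb{Q}^+$ is unbounded above, so $\sup\{x\mid f(x)<y\}=+\infty$, as claimed.

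For the upper pseudo-inverse and any $y\ge f(T_f)$, I would observe that $f(x)\le f(T_f)\le y$ for every $x\in\mathbb{Q}^+$, so $\{x\mid f(x)\le y\}=\mathbb{Q}^+$; being unbounded above, its supremum is $+\infty$, which is the asserted value. (The dual form $\inf\{x\mid f(x)>y\}=\inf\emptyset=+\infty$ works just as well.)

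There is no genuine obstacle here: the proof is elementary, and the only thing requiring care is bookkeeping --- interpreting $T_f$ as the minimal transient so that the ``$=T_f$'' in the first case is literally correct, and applying the conventions $\inf\emptyset=+\infty$ and ``$\sup$ of an unbounded subset of $\mathbb{Q}^+$ is $+\infty$'' consistently. It is also worth noting explicitly that the resulting function is \emph{not} UPP in the ordinary sense (it equals $+\infty$ on an unbounded set, and for $\lpi{f}$ the formula $d_{\lpi{f}}=c_f=0$ of \Cref{th:UppLowerPseudoInverse} would be degenerate), which is exactly why the UC case is excluded from \Cref{th:UppLowerPseudoInverse,th:UppUpperPseudoInverse} and must be stated separately.
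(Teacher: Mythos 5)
Your proposal is correct and matches the paper's treatment, which states this proposition with the set computations essentially serving as their own justification: everything follows from $f(x)\le f(T_f)$ for all $x$ together with $f\equiv f(T_f)$ on $[T_f,\infty)$, plus the conventions $\inf\emptyset=+\infty$ and that the supremum of an unbounded subset of $\mathbb{Q}^+$ is $+\infty$. Your explicit caveat that $T_f$ must be read as the minimal transient start for the ``$=T_f$'' in the first case to hold literally is a sensible point of care that the paper leaves implicit.
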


Note that, if one follows \Cref{def:wUI-function}, the result of both operators is \emph{weakly ultimately infinite}, while only $\upi{f}$ is \emph{ultimately infinite}.

\begin{prop}
    Let $ f:\mathbb{Q}^+ \to \mathbb{Q}^+ \cup \{+\infty\} $ be a non-decreasing, weakly ultimately infinite function with $ T_f \in \mathbb{Q}^+ $. 
    We recall that this is defined as
    \begin{align*}
        f(t) &< +\infty, \qquad \forall t < T_f,\\
        f(t) &= +\infty, \qquad \forall t > T_f.
    \end{align*}
    Then, its lower pseudo-inverse $ \lpi{f}(y) $ is for $ y \geq f\left(T_f^-\right) $
    \begin{equation*}
        \lpi{f}(y) = \sup\left\{ x\mid f(x) < y\right\} = T_f
    \end{equation*}
%	\begin{equation*}
%		\lpi{f}(y) = \begin{cases}
%			\inf\left\{ x\mid f(x)\geq \overbrace{y}^{=f(T_f)}\right\} = T_f, & \text{if }  y = f(T_f),\\
%			\sup\left\{ x\mid f(x) < y\right\} =  T_f, & \text{if }  y > f(T_f)
%		\end{cases}
%	\end{equation*}
    and its upper pseudo-inverse $ \upi{f}(y) $ is for $ y \geq f\left(T_f^-\right) $
    \begin{equation*}
        \upi{f}(y) = \sup\left\{ x\mid f(x) \leq y\right\} = T_f.
    \end{equation*}
    In other words, both pseudo-inverses are UC.
\end{prop}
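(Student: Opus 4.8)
The plan is to read off both pseudo-inverses directly from \Cref{def:PseudoInverses}, exploiting the only structural feature of a wUI function: it is finite for every $x < T_f$ and equal to $+\infty$ for every $x > T_f$, with nothing prescribed at $T_f$ itself. It is cleanest to work with the \emph{sup}-forms, $\lpi{f}(y) = \sup\left\{x \mid f(x) < y\right\}$ and $\upi{f}(y) = \sup\left\{x \mid f(x) \le y\right\}$; since $f$ is non-decreasing these coincide with the inf-forms by \Cref{def:PseudoInverses}. Throughout, $y$ is taken finite, and $f(T_f^-)$ is finite because $f$ is piecewise affine on the last interval before $T_f$.

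The \textbf{upper bound} $\le T_f$ is common to both operators and is the easy half: for every $x > T_f$ we have $f(x) = +\infty$, which satisfies neither $f(x) < y$ nor $f(x) \le y$; hence both defining sets are contained in $\left[0, T_f\right]$ and their suprema are at most $T_f$. The value at $x = T_f$ — whether $f(T_f)$ is finite or $+\infty$ — is never pinned down, but it also never matters, since it changes the defining set by at most one point and hence does not move the supremum.

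For the \textbf{lower bound} $\ge T_f$ I would combine monotonicity with finiteness below $T_f$. For the upper pseudo-inverse and any $y \ge f(T_f^-)$, every $x < T_f$ satisfies $f(x) \le f(T_f^-) \le y$, so $\left[0, T_f\right[ \subseteq \left\{x \mid f(x) \le y\right\}$ and the supremum is at least $T_f$; together with the upper bound this gives $\upi{f}(y) = T_f$. For the lower pseudo-inverse the same argument works as soon as $y > f(T_f^-)$ strictly, since then $f(x) \le f(T_f^-) < y$ for all $x < T_f$, yielding $\lpi{f}(y) = T_f$.

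The single delicate point — and the only real obstacle — is the boundary value $y = f(T_f^-)$ for the \emph{lower} pseudo-inverse: if $f$ terminates with a constant segment on some $\left]a, T_f\right[$ at level $f(T_f^-)$ (as happens, e.g., for the service curve of a pure delay element), then there $f(x) = f(T_f^-)$ is not $< f(T_f^-)$, and one gets $\lpi{f}\bigl(f(T_f^-)\bigr) = a < T_f$. I would resolve this by observing that the discrepancy concerns a single abscissa only: having $\lpi{f}(y) = T_f$ for all $y > f(T_f^-)$ and $\upi{f}(y) = T_f$ for all $y \ge f(T_f^-)$ already shows that both pseudo-inverses are eventually constant, hence UC in the sense of \Cref{def:ultimately-affine} (slope $0$, transient start $f(T_f^-)$); the part $y < f(T_f^-)$ is merely the ordinary pseudo-inverse of the bounded piecewise-affine restriction of $f$ and is immaterial to the UC classification. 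I would close by recording the induced representation parameters so the by-curve machinery applies, and by remarking that this boundary subtlety is precisely why wUI functions are segregated as a corner case.
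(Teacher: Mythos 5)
Your proof is correct, and it supplies something the paper does not: the paper states this proposition without any proof, the displayed set computations being its entire justification. Your two-sided argument is exactly what is needed — every $x>T_f$ is excluded from both defining sets because $f(x)=+\infty$ there, all of $\left[0,T_f\right[$ is included because $f(x)\le f(T_f^-)\le y$ there (with $f(T_f^-)$ finite by piecewise affinity), and the unconstrained value $f(T_f)$ can perturb each set by at most the single point $T_f$, which cannot move a supremum already squeezed between $\left[0,T_f\right[$ and $\left[0,T_f\right]$. More importantly, you have correctly identified a genuine imprecision in the statement itself: at the boundary ordinate $y=f(T_f^-)$ the claimed value of $\lpi{f}$ fails whenever $f$ terminates in a constant segment at level $f(T_f^-)$, and the paper's own canonical wUI example — the service curve of a delay element, $f\equiv 0$ on $\left[0,T_f\right]$ and $+\infty$ afterwards — is precisely such a case, where $\lpi{f}(0)=0\ne T_f$. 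Your resolution (the identity holds for every $y>f(T_f^-)$, which already forces $\lpi{f}$ to be UC, with the transient start pushed just beyond $f(T_f^-)$) is sound, and it mirrors how the paper's preceding UC proposition is careful to separate the cases $y=f(T_f)$ and $y>f(T_f)$; the wUI proposition should arguably have done the same for its lower pseudo-inverse.
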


Again, by using \Cref{def:wUI-function}, we show that the above property applies to \emph{weakly ultimately infinite} functions, not only to \emph{ultimately infinite} ones.
% Note that the behavior of $ f $ at time $ T_f $ is not necessarily specified in order to calculate the pseudo-inverses.

Starting from the above results, one can derive the few modifications to the algorithms described so far in this section to include these two corner cases. 
We leave this simple (yet tedious) task to the interested reader.

\FloatBarrier

\section{Composition of UPP Functions}
\label{sec:Composition}

In this section, we discuss the composition operator for UPP functions, i.e., $(f \circ g)(t)=f(g(t))$.
This section is organized as follows.
In \Cref{subsec:Composition-Theorems} we show that the composition of UPP functions is again UPP, and provide expressions to compute its UPP parameters a priori.
In \Cref{subsec:Composition-Algorithm} we discuss, first via an example and then via pseudo-code, how to compute the composition algorithmically.
In \Cref{subsec:Composition-Conclusions} we conclude with a summary of the by-curve algorithm and some observations on the algorithmic complexity of this operator.

\subsection{Properties of composition of UPP functions}
\label{subsec:Composition-Theorems}

We assume that the inner function $g$ is not wUI.\footnote{
    If, for $t > T$, $g(t) = +\infty$ then $h(t) = \lim_{y \rightarrow +\infty} f(y)$.
    The fact that $f$ is UPP does not guarantee that such limit exists, e.g., when $f$ is periodic. 
} 
We initially provide the result for generic $f$ and $g$. 
Later on, we show that, if either or both are UA or UC, we can improve upon this result. 

\begin{thm}
\label{thm:UPP-composition} 
    Let $f$ and $g$ be two UPP functions with $g$ being non-negative, non-decreasing and not wUI.
    Then, their composition $h \coloneqq f \circ g$ is again UPP with
    \begin{align}
        T_{h} & =\max\left\{ \lpi{g}(T_{f}),T_{g}\right\} ,\label{eq:T_h}\\
        d_{h} & =N_{d_{f}}\cdot d_{g}\cdot D_{c_{g}},\label{eq:d_h-UPP}\\
        c_{h} & =D_{d_{f}}\cdot N_{c_{g}}\cdot c_{f}.\label{eq:c_h-UPP}
    \end{align}
\end{thm}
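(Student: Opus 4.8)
The plan is to verify the pseudo-periodic identity~\eqref{eq:UPP-property} directly for $h\coloneqq f\circ g$ with the parameters in~\eqref{eq:T_h}--\eqref{eq:c_h-UPP}; piecewise-affinity of $h$ is inherited from that of $f$ and $g$ together with the monotonicity of $g$ (the breakpoints of $h$ are those of $g$ together with the $g$-preimages of the breakpoints of $f$, which form a locally finite set), so the substance is the pseudo-periodic part. Throughout I write $N_q$ and $D_q$ for the numerator and the positive denominator of $q\in\mathbb{Q}$ in lowest terms, so that $q=N_q/D_q$; I note at the outset that $d_h=N_{d_f}\cdot d_g\cdot D_{c_g}>0$, so $d_h$ is an admissible pseudo-period, and that the hypothesis $g\ge 0$ is exactly what makes $f\circ g$ well defined on $\mathbb{Q}^+$.

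The computation rests on two divisibility facts. First, $d_h/d_g=N_{d_f}\,D_{c_g}\in\mathbb{N}$, so $d_h$ is an integer multiple of $d_g$, and the UPP property~\eqref{eq:UPP-property} of $g$ (valid from $T_g$ on) gives, for every $t\ge T_g$ and $k\in\mathbb{N}_0$,
\begin{equation*}
    g(t+k\,d_h)=g(t)+k\cdot\tfrac{d_h}{d_g}\cdot c_g=g(t)+k\cdot N_{d_f}\,D_{c_g}\,c_g=g(t)+k\cdot N_{d_f}\,N_{c_g},
\end{equation*}
using $D_{c_g}\,c_g=N_{c_g}$. Second, $N_{d_f}\,N_{c_g}=D_{d_f}\,N_{c_g}\cdot d_f$ is an integer multiple of $d_f$ (because $N_{d_f}=D_{d_f}\,d_f$); hence, whenever additionally $g(t)\ge T_f$, the UPP property~\eqref{eq:UPP-property} of $f$ yields
\begin{equation*}
    h(t+k\,d_h)=f\bigl(g(t)+k\,N_{d_f}\,N_{c_g}\bigr)=f(g(t))+k\cdot D_{d_f}\,N_{c_g}\cdot c_f=h(t)+k\cdot c_h,
\end{equation*}
which is precisely~\eqref{eq:UPP-property} for $h$ with height $c_h=D_{d_f}\,N_{c_g}\,c_f$ as in~\eqref{eq:c_h-UPP}.

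It remains to show that the two side-conditions $t\ge T_g$ and $g(t)\ge T_f$ hold for every $t\ge T_h=\max\{\lpi{g}(T_f),T_g\}$. The first is immediate. For the second I would use the monotonicity of $g$ together with $\lpi{g}(T_f)=\inf\{x\mid g(x)\ge T_f\}$: for $t>\lpi{g}(T_f)$ one has $g(t)\ge T_f$ at once, and the hypothesis that $g$ is not wUI guarantees $\lpi{g}(T_f)<+\infty$, so $T_h$ is finite. The borderline abscissa $t=\lpi{g}(T_f)$ has to be handled separately — one must distinguish whether $g$ already attains a value $\ge T_f$ there or jumps across $T_f$ at that point — and I expect this to be the main obstacle, since it is the only place where the fine topology of $g$ (one-sided continuity, a discontinuity located exactly at $\lpi{g}(T_f)$) enters the argument; the remainder is just the numerator/denominator bookkeeping above. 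Finally, I would remark that when $f$ or $g$ is UA or UC the same chain of equalities applies verbatim but coarser — and in some cases much smaller — parameters suffice, which is the content of the refinements announced immediately after the theorem.
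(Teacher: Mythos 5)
Your proposal is correct and follows essentially the same route as the paper's proof: first apply the pseudo-periodicity of $g$ using $d_h/d_g=N_{d_f}D_{c_g}\in\mathbb{N}$, then that of $f$ using $d_h c_g/(d_g d_f)=D_{d_f}N_{c_g}\in\mathbb{N}_0$, and choose $T_h$ so that both $t\ge T_g$ and $g(t)\ge T_f$ hold. The only difference is that the borderline abscissa $t=\lpi{g}(T_f)$ you flag as a potential obstacle is dispatched in the paper simply by invoking the standard equivalence between $g(t)\ge T_f$ and $t\ge\lpi{g}(T_f)$ for non-decreasing $g$ (citing Liebeherr), rather than by a case distinction.
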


\begin{proof}
    Let $k_{h}\in\mathbb{N}$ be arbitrary but fixed. Since $g$ is UPP, it holds for all $t\geq T_{g}$ that
    \begin{align*}
        h(t+k_{h}\cdot d_{h}) & =f\left(g(t+k_{h}\cdot d_{h})\right)\\
        & =f\left(g\left(t+k_{h}\cdot\frac{d_{h}}{d_{g}}\cdot d_{g}\right)\right)\\
        & \overset{\eqref{eq:UPP-property}}{=}f\left(g(t)+k_{h}\cdot\frac{d_{h}}{d_{g}}\cdot c_{g}\right),
    \end{align*}
    where we used the UPP property of $g$ in the last line.
    Note that $k_{g}\coloneqq k_{h}\cdot\frac{d_{h}}{d_{g}}\in\mathbb{N}_{0},$ since
    $\frac{d_{h}}{d_{g}}\overset{\eqref{eq:d_h-UPP}}{=}N_{d_{f}}\cdot D_{c_{g}}\in\mathbb{N},$
    where we used that $d_{f}>0$. 
    Moreover, since $f$ is UPP, too, we have under this additional assumption of $g(t)\ge T_{f}$ that 
    \begin{align*}
        h(t+k_{h}\cdot d_{h}) & =f\left(g(t)+k_{h}\cdot\frac{d_{h}}{d_{g}}\cdot c_{g}\right)\\
        & =f\left(g(t)+k_{h}\cdot\frac{d_{h}\cdot c_{g}}{d_{g}\cdot d_{f}}\cdot d_{f}\right)\\
        & \overset{\eqref{eq:UPP-property}}{=}f(g(t))+k_{h}\cdot\frac{d_{h}\cdot c_{g}}{d_{g}\cdot d_{f}}\cdot c_{f}\\
        & =h(t)+k_{h}\cdot\frac{d_{h}\cdot c_{g}\cdot c_{f}}{d_{g}\cdot d_{f}}\\
        & \overset{\eqref{eq:c_h-UPP}}{=}h(t)+k_{h}\cdot c_{h}.
    \end{align*}
    Note that $k_{f}\coloneqq k_{h}\cdot\frac{d_{h}\cdot c_{g}}{d_{g}\cdot d_{f}}\in\mathbb{N}_{0}$,
    since $\frac{d_{h}\cdot c_{g}}{d_{g}\cdot d_{f}}\overset{\eqref{eq:d_h-UPP}}{=}\frac{N_{d_{f}}}{d_{f}}\cdot D_{c_{g}}\cdot c_{g}=D_{d_{f}}\cdot N_{c_{g}}\in\mathbb{N}_{0}$
    and we used that $c_{g}\geq0$ as $g$ is non-decreasing. 

    We set $t\geq T_{g}$ and $g(t)\ge T_{f}$, thus ensuring that both $f$ and $g$ are in their periodic part. Exploiting the notion of a lower pseudo-inverse
    and $g$ being non-decreasing, the latter expression is equivalent
    to $t\geq \lpi{g}(T_{f})$ \cite[p. 62]{Liebeherr2017}. 
    Therefore, we require 
    \begin{equation*}
    t\geq T_{h}\overset{\eqref{eq:T_h}}{=}\max\left\{ \lpi{g}(T_{f}),T_{g}\right\} .
    \end{equation*}
    This concludes the proof.
\end{proof}

\begin{rem}
    Note that the above is also true for the particular case in which $d_{f}\in\mathbb{N},c_{g}\in\mathbb{N}_{0}$. 
    In fact, it follows that $N_{d_{f}}=d_{f}$ and $D_{c_{g}}=1$ and thus
    \begin{equation*}
        d_{h}\overset{\eqref{eq:d_h-UPP}}{=}N_{d_{f}}\cdot D_{c_{g}}\cdot d_{g}=d_{f}\cdot d_{g},
    \end{equation*}
    and the properties are then verified since $\frac{d_{h}}{d_{g}}=d_{f}\in\mathbb{N}$ and $\frac{d_{h}\cdot c_{g}}{d_{g}\cdot d_{f}}=c_{g}\in\mathbb{N}_{0}$.
    The corresponding $c_{h}$ is $c_{f}\cdot c_{g}$.
\end{rem}

It follows from \Cref{thm:UPP-composition} that, in order to compute the representation $R_{h}$, we only need to compute $S^{D_h}_{h}$ where
\begin{equation*}
    D_h = [0, T_h + d_h[ = \left[0, \max\left\{ \lpi{g}(T_{f}),T_{g}\right\} + N_{d_{f}}\cdot d_{g}\cdot D_{c_{g}} \right[.
\end{equation*}
It follows that
\begin{equation*}
    S^{D_h}_h = S^{D_f}_f \circ S^{D_g}_g, 
\end{equation*}
where
\begin{equation}
    \label{eq:CompositionD}
    \begin{aligned}
        D_g &= \left[ 0, T_h + d_h \right[ \\
        D_f &= \left[ g(0), g\left((T_h + d_h)^-\right) \right] 
    \end{aligned}
\end{equation}

The reason $D_f$ needs to be right-closed is that $S^{D_g}_g$ may end with a constant segment. 
If this happens, $\exists \overline{t} \in D_g$ so that $g(\overline{t}) = g\left((T_h + d_h)^-\right)$ thus we will need to compute $f \left( g \left( \overline{t} \right) \right) = f \left( g\left((T_h + d_h)^-\right) \right)$, and that is in fact the right boundary of $D_f$.
On the other hand, if $S^{D_g}_g$ ends with a strictly increasing segment, it is safe to have $D_f$ right-open.

Hereafter, we show that the above result can be improved when either or both functions are UA. 
First, we consider the case for which only $g$ is UA.

\begin{prop}
    \label{prop:UPP-composition-g-UA}
    Let $f$ and $g$ be two UPP functions that are not wUI, with $g$ being non-negative, non-decreasing, UA, with $\rho_{g}>0$ (hence not UC).
    Then, their composition $h\coloneqq f\circ g$ is again UPP with
    \begin{align}
        T_{h} & =\max\left\{ \lpi{g}(T_{f}),T_{g}\right\} , \notag\\
        d_{h} & = \frac{d_f}{\rho_g}, \label{eq:d_h-g-UA}\\
        c_{h} & = c_{f}. \label{eq:c_h-g-UA}
    \end{align}
\end{prop}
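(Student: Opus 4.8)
The plan is to reuse the machinery of \Cref{thm:UPP-composition} but exploit that the inner function $g$, being UA with slope $\rho_g > 0$, has a particularly simple pseudo-periodic structure: beyond $T_g^a$ it is a single affine piece, so we may take $d_g$ arbitrarily small (equivalently, treat $g$ as having a ``period'' consisting of one segment of length $\ell$ and height $\rho_g \ell$ for any $\ell > 0$). The strategy is to pick this free period length so that the height of the inner period exactly matches one period of $f$, thereby avoiding the least-common-multiple blow-up $N_{d_f} \cdot D_{c_g}$ that appears in \eqref{eq:d_h-UPP}.

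Concretely, I would proceed as follows. First, fix an arbitrary $k_h \in \mathbb{N}$ and set $d_h = d_f / \rho_g$. For $t \ge T_g$ (equivalently $t \ge T_g^a$, taking $T_g = T_g^a$), use the affine form of $g$ from \eqref{eq:ultimately-affine} to compute directly
\begin{equation*}
    g(t + k_h \cdot d_h) = g(t) + \rho_g \cdot k_h \cdot d_h = g(t) + k_h \cdot d_f,
\end{equation*}
where the key cancellation $\rho_g \cdot d_h = d_f$ is exactly the definition of $d_h$ in \eqref{eq:d_h-g-UA}. Second, assuming additionally that $g(t) \ge T_f$, apply the UPP property \eqref{eq:UPP-property} of $f$ with shift $k_f = k_h$:
\begin{equation*}
    h(t + k_h \cdot d_h) = f\bigl(g(t) + k_h \cdot d_f\bigr) = f(g(t)) + k_h \cdot c_f = h(t) + k_h \cdot c_h,
\end{equation*}
which gives $c_h = c_f$ as in \eqref{eq:c_h-g-UA}. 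Third, combine the two range conditions $t \ge T_g$ and $g(t) \ge T_f$; since $g$ is non-decreasing, the latter is equivalent to $t \ge \lpi{g}(T_f)$ (cf.\ \cite[p.~62]{Liebeherr2017}, as already used in \Cref{thm:UPP-composition}), so it suffices to require $t \ge T_h = \max\{\lpi{g}(T_f), T_g\}$.

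I do not expect a serious obstacle here: the argument is a specialization of \Cref{thm:UPP-composition} in which the integrality bookkeeping ($k_g, k_f \in \mathbb{N}_0$) trivializes because $d_h$ is chosen to make $k_g = k_h \cdot d_f / (\rho_g d_g)$ and $k_f = k_h$ automatically. The one point deserving a sentence of care is the legitimacy of using $T_g^a$ in place of $T_g$ and treating $\rho_g$ as the single slope of $g$'s tail --- this is exactly the content of \Cref{def:ultimately-affine}, whose equivalence with the literature's notion of UA is established in the referenced appendix, so the inner step genuinely only needs $g(t) = g(T_g^a) + \rho_g(t - T_g^a)$ and not any least-common-multiple argument. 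A minor corner to mention is that $d_h = d_f/\rho_g$ lies in $\mathbb{Q}^+ \setminus \{0\}$ since $d_f > 0$ and $\rho_g > 0$, so it is a valid pseudo-period length; and $c_h = c_f$ inherits membership in $\mathbb{Q}$ trivially.
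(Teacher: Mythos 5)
Your proof is correct and follows essentially the same route as the paper's: both exploit the affine tail of $g$ so that $g(t+k_h d_h)=g(t)+\rho_g k_h d_h=g(t)+k_h d_f$ (the cancellation $\rho_g d_h = d_f$ being the whole point of the choice \eqref{eq:d_h-g-UA}), then apply the UPP property of $f$ with $k_f=k_h$ and handle the domain condition $g(t)\ge T_f$ via $\lpi{g}(T_f)$ exactly as in \Cref{thm:UPP-composition}. Your version is, if anything, slightly cleaner, since you compute the increment of $g$ directly instead of expanding and recollecting around $T_h$ as the paper does.
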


\begin{proof}
    Let $k_{h}\in\mathbb{N}$ be arbitrary but fixed. Since $g$ is assumed to be UA, it holds for all $t\geq T_{h}$ that
    \begin{align*}
        h(t+k_{h}\cdot d_{h}) & =f\left(g(t+k_{h}\cdot d_{h})\right)\\
        & \overset{\eqref{eq:ultimately-affine}}{=} f\left(g(T_{h})+\rho_{g}\cdot\left(t+k_{h}\cdot d_{h}-T_{h}\right)\right)\\
        & =f\left(\rho_{g}\cdot t+g(T_{h})-\rho_{g}\cdot T_{h}+k_{h}\cdot d_{f}\right)\\
        & \overset{\eqref{eq:UPP-property}}{=} f\left(\rho_{g}\cdot t+g(T_{h})-\rho_{g}\cdot T_{h}\right)+k_{h}\cdot c_{f}\\
        & =f\left(g(T_{h})-\rho_{g}\cdot\left(t-T_{h}\right)\right)+k_{h}\cdot c_{f}\\
        & \overset{\eqref{eq:ultimately-affine}}{=}f\left(g(t)\right)+k_{h}\cdot c_{f}\\
        & =h(t)+k_{h}\cdot c_{h}.
    \end{align*}
\end{proof}

Again, in order to compute the representation $R_{h},$ we only need $S_{h}^{D_{h}}$, where
\begin{equation*}
    D_{h}=\left[0,T_{h} + d_h\right[=\left[0,\max\left\{ \lpi g(T_{f}),T_{g}\right\} + \frac{d_f}{\rho_g} \right[.
\end{equation*}
It follows that 
\begin{equation*}
    S_{h}^{D_{h}}=S_{f}^{D_{f}}\circ S_{g}^{D_{g}},
\end{equation*}
where
\begin{equation}
    \begin{aligned}
        D_{g} & =\left[0,T_{h}+d_{h}\right[\\
        &\overset{\eqref{eq:d_h-g-UA}}{=} \left[0,T_{h}+\frac{d_{f}}{\rho_{g}}\right[\\
        D_{f} & =\left[g(0),g\left(\left(T_{h}+d_{h}\right)^{-}\right)\right[\\
        & \overset{\eqref{eq:d_h-g-UA}}{=}\left[g(0),g\left(T_{h}+\frac{d_{f}}{\rho_{g}}\right)\right[\\
            & =\left[g(0),g\left(T_{h}\right)+d_{f}\right[.
    \end{aligned}
    \label{eq:CompositionD-g-UA}
\end{equation}
Here, we observe that domain $D_f$ is smaller than the one obtained by applying directly \Cref{thm:UPP-composition}, due to the disappearance of a factor $D_{d_f} \cdot N_{c_g} \geq 1$. 
In fact, with \Cref{thm:UPP-composition} we would have:
%in comparison to a naive approach by directly applying \Cref{thm:UPP-composition}:
\begin{align*}
    D_{g} & =\left[0,T_{h}+d_{h}\right[\\
    & \overset{\eqref{eq:d_h-UPP}}{=} \left[0,T_{h} + N_{d_f} \cdot d_g \cdot D_{c_g}\right[\\
    % & =\left[0,T_{h} + D_{d_f} \cdot d_g \cdot D_{c_g} \cdot d_f\right[\\
    % & =\left[0,T_{h} + D_{d_f} \cdot N_{c_g} \cdot \frac{d_g}{c_g} \cdot d_f \right[\\
    & =\left[0,T_{h} + D_{d_f} \cdot N_{c_g} \cdot \frac{d_f}{\rho_g}\right[\\
    D_{f} & =\left[g(0),g\left(\left(T_{h}+d_{h}\right)^{-}\right)\right[\\
    & \overset{\eqref{eq:d_h-UPP}}{=}\left[g(0),g\left(\left(T_{h}+N_{d_f} \cdot d_g \cdot D_{c_g}\right)^{-}\right)\right[\\
    % & =\left[g(0),g\left(T_{h} + D_{d_f} \cdot N_{c_g} \cdot \frac{d_f}{\rho_g}\right)\right[\\
    & =\left[g(0),g(T_h) + D_{d_f} \cdot N_{c_g} \cdot d_f\right[.
\end{align*}

As specified in the statement of \Cref{prop:UPP-composition-g-UA}, we exclude the case for which $g$ is UC.
This is because of \Cref{eq:d_h-g-UA} where $\rho_g$ is in the denominator, hence cannot be zero.
However, if $g$ is UC, a stronger proposition can be found as reported in \Cref{app-sec:comp-UCF}.

Next, we consider the case in which only $f$ is UA.

\begin{prop}
\label{prop:UPP-composition-f-UA}
    Let $f$ be UA and $g$ be an UPP function that is non-negative, non-decreasing and not wUI. 
    Then, their composition $h\coloneqq f\circ g$ is again UPP with
    \begin{align}
        T_{h} & =\max\left\{ \lpi{g}(T_{f}),T_{g}\right\} \notag\\
        d_{h} & =d_{g},\label{eq:d_h-f-UA}\\
        c_{h} & = c_{g} \cdot \rho_f.\label{eq:c_h-f-UA}
    \end{align}
\end{prop}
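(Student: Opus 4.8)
The plan is to mirror the proofs of \Cref{thm:UPP-composition} and \Cref{prop:UPP-composition-g-UA}: fix an arbitrary $k_{h}\in\mathbb{N}$, take any $t\geq T_{h}$, and verify the defining identity \eqref{eq:UPP-property} for $h=f\circ g$ by unrolling the composition, this time exploiting that $f$ is \emph{ultimately affine} rather than merely UPP. The cleaner outcome — $d_{h}=d_{g}$ and $c_{h}=c_{g}\cdot\rho_{f}$ — reflects the fact that composing with an affine tail commutes with pseudo-periodicity without incurring any denominator/divisibility bookkeeping.

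First I would use $d_{h}=d_{g}$ together with $T_{h}\geq T_{g}$ (immediate from \eqref{eq:T_h}), so that $t\geq T_{g}$ and the UPP property of $g$ applies: $g(t+k_{h}\cdot d_{h})=g(t+k_{h}\cdot d_{g})=g(t)+k_{h}\cdot c_{g}$, hence $h(t+k_{h}\cdot d_{h})=f\bigl(g(t)+k_{h}\cdot c_{g}\bigr)$. Next I would argue that this argument lies in the affine region of $f$: since $g$ is non-decreasing, $g(t)\geq T_{f}$ is equivalent to $t\geq\lpi{g}(T_{f})$ \cite[p.~62]{Liebeherr2017}, which holds because $T_{h}\geq\lpi{g}(T_{f})$; and since $g$ is non-decreasing we have $c_{g}\geq 0$, so $g(t)+k_{h}\cdot c_{g}\geq g(t)\geq T_{f}$ as well. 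Applying \eqref{eq:ultimately-affine} (with $T_{f}$ in the role of $T_{f}^{a}$) then gives $f\bigl(g(t)+k_{h}\cdot c_{g}\bigr)=f(T_{f})+\rho_{f}\cdot\bigl(g(t)+k_{h}\cdot c_{g}-T_{f}\bigr)=\bigl[f(T_{f})+\rho_{f}\cdot(g(t)-T_{f})\bigr]+k_{h}\cdot\rho_{f}\cdot c_{g}=f(g(t))+k_{h}\cdot\rho_{f}\cdot c_{g}=h(t)+k_{h}\cdot c_{h}$ by \eqref{eq:c_h-f-UA}, which closes the argument. Also $d_{h}=d_{g}>0$ and $c_{h}=c_{g}\cdot\rho_{f}\in\mathbb{Q}$ are admissible UPP parameters, so no side conditions are needed.

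The only real care-point — a mild one — is the mismatch between the transient abscissa $T_{f}^{a}$ of \Cref{def:ultimately-affine} and the UPP start $T_{f}$ used in the statement. I would dispatch this by noting that for a UA function we may take $T_{f}=T_{f}^{a}$ (the single-segment period of slope $\rho_{f}$ can be placed at $T_{f}^{a}$), or, equivalently, carry $T_{f}^{a}$ through the computation and set $T_{h}=\max\{\lpi{g}(T_{f}^{a}),T_{g}\}$. Everything else is routine substitution; unlike \Cref{prop:UPP-composition-g-UA}, no quantities appear in a denominator and no integrality of ratios must be checked, which is precisely why the parameters come out as stated.
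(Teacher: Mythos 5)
Your proof is correct and follows essentially the same route as the paper's: unroll $g$ via its UPP property (using $d_h=d_g$ and $T_h\ge T_g$), then expand $f$ affinely and pull out the term $k_h\cdot\rho_f\cdot c_g$. The only cosmetic difference is that you anchor the affine expansion at $T_f$ while the paper anchors it at $g(T_h)$; both are valid since $g(t)+k_h c_g\ge g(t)\ge g(T_h)\ge T_f$, and your explicit remark reconciling $T_f$ with $T_f^a$ is a reasonable (if optional) addition.
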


\begin{proof}
    Let $k_{h}\in\mathbb{N}$ be arbitrary but fixed. Since $f$ is assumed to be UA, it holds for all $t\geq T_{h}$ that
    \begin{align*}
        h(t+k_{h}\cdot d_{h}) & =f\left(g(t+k_{h}\cdot d_{h})\right)\\
        & \overset{\eqref{eq:UPP-property}}{=} f\left(g(t) + k_{h}\cdot c_{g}\right)\\
        & \overset{\eqref{eq:ultimately-affine}}{=} f(g(T_h)) + \rho_f \cdot \left(g(t) + k_{h}\cdot c_{g} - g(T_h)\right)\\
        & =f(g(T_h)) + \rho_f \cdot \left(g(t) - g(T_h)\right) + k_{h} \cdot c_{g} \cdot \rho_f\\
        & \overset{\eqref{eq:ultimately-affine}}{=} f(g(t)) + k_{h} \cdot c_{g} \cdot \rho_f\\
        & =h(t) + k_{h} \cdot c_{h}.
    \end{align*}
\end{proof}

Again, for the representation $R_{h},$ we only compute $S_{h}^{D_{h}}$, where
\begin{equation*}
    D_{h}=\left[0,T_{h} + d_h\right[=\left[0,\max\left\{ \lpi g(T_{f}),T_{g}\right\} + d_g \right[.
\end{equation*}
It follows that 
\begin{equation*}
    S_{h}^{D_{h}}=S_{f}^{D_{f}}\circ S_{g}^{D_{g}},
\end{equation*}
where
\begin{equation}
    \begin{aligned}
        D_{g} & =\left[0,T_{h}+d_{h}\right[\\
        &\overset{\eqref{eq:d_h-f-UA}}{=} \left[0,T_{h}+d_g\right[\\
        D_{f} & =\left[g(0),g\left(\left(T_{h}+d_{h}\right)^{-}\right)\right]\\
        & \overset{\eqref{eq:d_h-f-UA}}{=} \left[g(0),g\left(\left(T_{h}+d_g\right)^{-}\right)\right].
    \end{aligned}
    \label{eq:CompositionD-f-UA}
\end{equation}
Again, domain $D_g$ is smaller than the one that \Cref{thm:UPP-composition} would yield, due to the disappearance of a factor $N_{d_f} \cdot D_{c_g} \geq 1$. 
For comparison, \Cref{thm:UPP-composition} yields
%in comparison to a naive approach by directly applying \Cref{thm:UPP-composition}:
\begin{align*}
    D_{g} & =\left[0,T_{h}+d_{h}\right[\\
    & \overset{\eqref{eq:d_h-UPP}}{=} \left[0,T_{h} + N_{d_f} \cdot D_{c_g} \cdot d_g\right[\\
    D_{f} & =\left[g(0),g\left(\left(T_{h}+d_{h}\right)^{-}\right)\right]\\
    & \overset{\eqref{eq:d_h-UPP}}{=} \left[g(0),g\left(\left(T_{h}+N_{d_f} \cdot d_g \cdot D_{c_g}\right)^{-}\right)\right].
\end{align*}

% \begin{thm}
% \label{thm:UA-composition-alt} 
% Let $f$ and $g$ be ultimately affine functions with $g$ being non-negative, non-decreasing. 
% Then, their composition $h\coloneqq f\circ g$ is again an ultimately affine function with
% \begin{align}
% T_{h}^{a} & =\max\left\{ \lpi{g}(T_{f}^{a}),T_{g}^{a}\right\} ,\label{eq:T_h-3}\\
% \rho_{h} & =\rho_{f}\cdot\rho_{g}\label{eq:rho_h-3}\\
% \sigma_{h} & =\rho_{f}\sigma_{g}+\sigma_{f}.\label{eq:sigma_h-3}
% \end{align}
% \end{thm}

% \begin{proof}
% If $f$ is ultimately $-\infty$ or $+\infty$ or if $g$ is ultimately
% $+\infty$, the result is trivial. Let us assume that $f$ and $g$
% are finite. Then, we have for all $t>T_{h}\coloneqq\max\left\{ \lpi{g}(T_{f}),T_{g}\right\} $
%     \begin{align*}
%         h(t) & =f\left(g(t)\right)\\
%         & =f\left(\rho_{g}\cdot t+\sigma_{g}\right)\\
%         & =\rho_{f}\left(\rho_{g}\cdot t+\sigma_{g}\right)+\sigma_{f}\\
%         & =\rho_{f}\cdot\rho_{g}\cdot t+\left(\rho_{f}\sigma_{g}+\sigma_{f}\right).
%     \end{align*}
% \end{proof}

When both functions are UA, we obtain a stronger result by showing that the composition is UA again.

\begin{prop}
\label{prop:UA-composition} 
    Let $f$ and $g$ be UA functions with $g$ being non-negative, non-decreasing and not wUI. 
    Then, their composition $h\coloneqq f\circ g$ is again UA with
    \begin{align}
        T_{h}^{a} & =\max\left\{ \lpi{g}(T_{f}^{a}),T_{g}^{a}\right\} , \notag\\
        \rho_{h} & =\rho_{f}\cdot\rho_{g}.
        \label{eq:rho_h-UA-UA}
    \end{align}
\end{prop}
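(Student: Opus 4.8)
The plan is to verify \Cref{def:ultimately-affine} for $h$ \emph{directly}, taking advantage of the fact that a UA function is \emph{genuinely} affine (not merely pseudo-periodic) past its threshold, so that the composition of two such functions is affine past a suitable threshold. Set $T_h^a \coloneqq \max\{\lpi{g}(T_f^a), T_g^a\}$ and fix $t \ge T_h^a$. Since $t \ge T_g^a$ and $g$ is affine on $[T_g^a, \infty[$ by \eqref{eq:ultimately-affine}, I would first re-anchor $g$ at the larger abscissa $T_h^a$, writing $g(t) = g(T_h^a) + \rho_g \cdot (t - T_h^a)$ (immediate from $g(T_h^a) = g(T_g^a) + \rho_g\cdot(T_h^a - T_g^a)$). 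On the other hand, $t \ge \lpi{g}(T_f^a)$ together with $g$ non-decreasing gives $g(t) \ge T_f^a$, and likewise $g(T_h^a) \ge T_f^a$; this is exactly the pseudo-inverse equivalence already used in the proof of \Cref{thm:UPP-composition} (see \cite[p.~62]{Liebeherr2017}), so $f$ may be expanded via \eqref{eq:ultimately-affine} at both $g(t)$ and $g(T_h^a)$.

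Then the computation is routine:
\begin{align*}
    h(t) &= f\bigl(g(t)\bigr) \overset{\eqref{eq:ultimately-affine}}{=} f(T_f^a) + \rho_f\cdot\bigl(g(t) - T_f^a\bigr) \\
    &= f(T_f^a) + \rho_f\cdot\bigl(g(T_h^a) + \rho_g\cdot(t - T_h^a) - T_f^a\bigr) \\
    &= \Bigl[\,f(T_f^a) + \rho_f\cdot\bigl(g(T_h^a) - T_f^a\bigr)\Bigr] + \rho_f\cdot\rho_g\cdot(t - T_h^a) \\
    &\overset{\eqref{eq:ultimately-affine}}{=} f\bigl(g(T_h^a)\bigr) + \rho_f\cdot\rho_g\cdot(t - T_h^a) = h(T_h^a) + \rho_h\cdot(t - T_h^a),
\end{align*}
which is precisely \eqref{eq:ultimately-affine} for $h$ with threshold $T_h^a$ and slope $\rho_h = \rho_f \cdot \rho_g$, as claimed.

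As a consistency check against the rest of \Cref{sec:Composition}, note that \Cref{prop:UPP-composition-f-UA} (with $f$ UA) gives $d_h = d_g$ and $c_h = c_g \cdot \rho_f$; since a UA $g$ has a single-segment period one has $c_g = \rho_g \cdot d_g$, hence $c_h / d_h = \rho_f \cdot \rho_g$, matching $\rho_h$ above — the direct argument additionally certifies that $h$ is breakpoint-free beyond $T_h^a$, i.e.\ truly UA and not just UPP with a trivial period. I do not expect a genuine obstacle here: the only delicate point is the pseudo-inverse step $g(t) \ge T_f^a \iff t \ge \lpi{g}(T_f^a)$ at the exact abscissa $T_h^a$, which is harmless because $g$, being UA, is continuous on $[T_g^a, \infty[ \,\supseteq \{T_h^a\}$, so $g(T_h^a)$ indeed meets $T_f^a$; the degenerate sub-case in which $g$ is UC (so $\rho_g = 0$ and $\lpi{g}(T_f^a)$ may be $+\infty$) is the one that must instead be read off from \Cref{app-sec:comp-UCF}.
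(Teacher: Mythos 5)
Your proof is correct and follows essentially the same route as the paper's: fix $t \ge T_h^a = \max\{\lpi{g}(T_f^a), T_g^a\}$, expand $g$ affinely past $T_g^a$ and $f$ affinely past $T_f^a$, and collect terms to get $h(t) = h(T_h^a) + \rho_f\rho_g(t - T_h^a)$. The only cosmetic difference is that you anchor $f$'s affine expansion at $T_f^a$ while the paper anchors it at $g(T_h^a)$; your extra care about the boundary case $t = \lpi{g}(T_f^a)$ is a welcome refinement the paper leaves implicit.
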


\begin{proof}
    If $f$ is ultimately $-\infty$ or $+\infty$ or if $g$ is ultimately $+\infty$, the result is trivial. 
    Let us assume that $f$ and $g$ are not wUI. 
    Define $T_{h}^{a}\coloneqq\max\left\{ \lpi{g}(T_{f}^{a}),T_{g}^{a}\right\} $.
    Then we have that for any $t \ge T_{h}^{a}$
    \begin{align*}
        h(t+T_{h}^{a}) & =f\left( g(t+T_{h}^{a}) \right)\\
        & \overset{\eqref{eq:ultimately-affine}}{=}f\left( g(T_{h}^{a}) + \rho_{g} \cdot (t - T_{h}^{a}) \right)\\
        & \overset{\eqref{eq:ultimately-affine}}{=}f(g(T_{h}^{a})) + \rho_{f} \cdot \left( \left(g(T_{h}^{a}) + \rho_{g} \cdot (t - T_{h}^{a})\right) - g(T_{h}^{a}) \right)\\
        & = f(g(T_{h}^{a})) + \rho_f \cdot \rho_g \cdot (t - T_{h}^{a}) \\
        & = h(T_{h}^{a}) + \rho_{f} \cdot \rho_{g} \cdot (t - T_{h}^{a}).
    \end{align*}
    % where we used \Cref{prop:UA-for-T-prime} from \Cref{app-sec:properties-UAF} in the second and fourth line.
\end{proof}

Considering \Cref{eq:CompositionD}, we observe how taking these results into account will yield tighter $D_f, D_g$ than what we obtain with \Cref{thm:UPP-composition}.

Finally, we mention that, if either or both $f$ and $g$ are UC, then the composition can be simplifed further, even with respect to the above properties. 
We report the results in \Cref{app-sec:comp-UCF}.

\subsection{By-sequence algorithm for composition}
\label{subsec:Composition-Algorithm}

In this section, we discuss the by-sequence algorithm for the composition.
%We recall that with by-sequence we mean that the operands, and thus the result, are defined for a limited domain.
Without loss of generality, we focus on sequences $S_g$, representing a non-negative and non-decreasing function $g$ over an interval $\left[0, t\right[$, and $S_f$, representing a function $f$ defined on over the interval $\left[g(0), g(t^-)\right]$.\footnote{We consider $D_f$ to be always right-closed since it yields the correct result for both cases discussed in the previous section. 
The right boundary of $D_f$ is never used as a breakpoint in the algorithm anyway, as imposed by the condition $y_m < g(b^-)$ discussed below.}
Then, $S_h = S_f \circ S_g$ is the sequence representing $h = f \circ g$ over the interval $\left[0, t\right[$.
We use the example shown in \Cref{fig:comp-sequence-example}, where $t = 6$ and $g(t^-) = 4$.

\begin{figure}[!t]
    \centering
    \begin{subfigure}{0.45\linewidth}
        % \tikzsetnextfilename{comp-sequence-example-f}
        % \includegraphics[width=\linewidth]{figures/comp-sequence-example-f}
        \includegraphics[width=\linewidth]{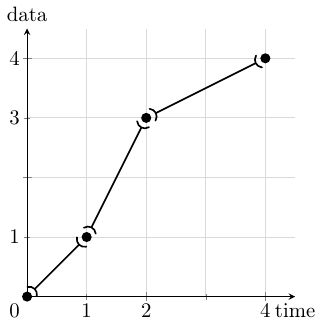}
        \caption{$S_f$}
        \label{fig:comp-sequence-example.f}
    \end{subfigure}
    \begin{subfigure}{0.45\linewidth}
        % \tikzsetnextfilename{comp-sequence-example-g}
        % \includegraphics[width=\linewidth]{figures/comp-sequence-example-g}
        \includegraphics[width=\linewidth]{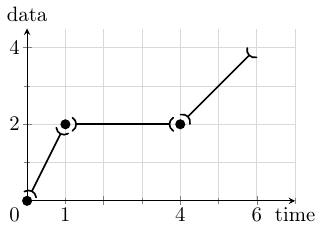}
        \caption{$S_g$}
        \label{fig:comp-sequence-example.g}
    \end{subfigure}
    \begin{subfigure}{0.45\linewidth}
        % \tikzsetnextfilename{comp-sequence-example-h}
        % \includegraphics[width=\linewidth]{figures/comp-sequence-example-h}
        \includegraphics[width=\linewidth]{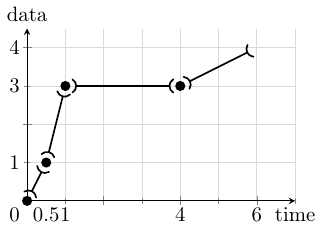}
        \caption{$S_f \circ S_g$}
        \label{fig:comp-sequence-example.h}
    \end{subfigure}
    \caption{Example of composition of two sequences}
    \label{fig:comp-sequence-example}
\end{figure}

First, we consider the shape of $f \circ g$ on an interval $\left]a,b\right[ \subset \left[0, t\right[$, $a,b\in\mathbb{Q}^{+}$. 
Consider the case in which, for this interval, there exist $\rho_{g},\rho_{f}\in\mathbb{Q}^{+}$ so that
\begin{equation}
    \begin{aligned}g(x) & =g(a^{+})+\rho_{g} \cdot (x-a), \qquad\forall x\in\left]a,b\right[,\\
        f(x) & =f\left(g(a^{+}){}^{+}\right)+\rho_{f} \cdot \left(x-g(a^{+})\right), \qquad\forall x\in\left]g(a^{+}),g(b^{-})\right[,
    \end{aligned}
    \label{eq:segment-comp}
\end{equation}
where we use the shorthand notation 
\begin{equation*}
    f\left(g(a^{+}){}^{+}\right) = \lim_{x\to a^{+}}f\left(g(x)\right) = \lim_{y\to y_{0}}f(y),
\end{equation*}
with $y_{0}\coloneqq\lim_{x\to a^{+}}g(x)$.

More broadly speaking, we have segment of $g$ mapping to a segment of $f$.
In the example of \Cref{fig:comp-sequence-example}, $]4, 6[$ is such an interval.  
Then, in this interval we can apply the chain rule and find that $h'(x)=f'(g(x))\cdot g'(x)=\rho_{g}\cdot \rho_{f}$ for all $x\in\left]a,b\right[$. 
Thus, $h$ also a segment on $\left]a,b\right[$.

If either of the equations in (\ref{eq:segment-comp}) does not apply, it means that one function has one or more breakpoints over this interval.
Assume initially that this be $g$. 
Let this finite sets of breakpoints be $t_{0},\dots,t_{n}$, with $a < t_{0} < \cdots < t_{n} < b$. 
Then, the intervals $\left]a,t_{0}\right[,\dots,\left]t_{n},b\right[$ verify the properties in Equation~(\ref{eq:segment-comp}) while for any breakpoint $t_{i}$ we can just compute $f(g(t_{i}))$. 
A similar reasoning can be done for $f$: consider the finite set of breakpoints $y_0,\dots,y_m$, with $g(a^+) < y_{0} < \cdots < y_{m} < g(b^-)$. 
Then, we can use the lower pseudo-inverse of $g$ to find the corresponding $\overline{t}_{i}=\lpi{g}(y_{i})$.\footnote{Following the discussion in \Cref{sec:PseudoInverses}, $\lpib{S_g}$ is sufficient for this computation.}
The set $\left\{ t_{1},\dots,t_{n}\right\} \cup \left\{ \overline{t}_{1},\dots,\overline{t}_{m}\right\}$, preserving the ascending order, defines a finite set of breakpoints for $f \circ g$.
Then, we have again a finite set of points $\left(t_i, f(g(t_i))\right)$, and open intervals for which we compute $h$ as a segment with $\rho_h = \rho_f \cdot \rho_g$. 
In the example of \Cref{fig:comp-sequence-example}, $]0, 4[$ is such an interval:
\begin{itemize}
    \item for $S_g$ we find the set $\{t_1 = 1\}$
    \item for $S_f$ we find the set $\{y_1 = 1\} \rightarrow \left\{ \overline{t}_1 = \frac{1}{2} \right\}$
    \item the combined set of breakpoints is then $\left\{ \frac{1}{2}, 1 \right\}$, 
        and the open intervals that verify \Cref{eq:segment-comp} is $\left\{ \left]0, \frac{1}{2}\right[, \left]\frac{1}{2}, 1\right[, ]1, 4[ \right\}$
\end{itemize}

By generalizing this reasoning, we obtain \Cref{alg:pseudo-composition-seq}.

\begin{algorithm}[ht]
    \caption{Pseudo-code for the composition of finite sequences} \label{alg:pseudo-composition-seq}
    \hspace*{\algorithmicindent} \textbf{Input} Two finite sequences of elements, $ S_f $ of $f$ and $ S_g $ of $g$, so that $ S_g $ defined on $ \left[0, a\right[ $ and $ S_f $ defined on $ \left[g(0), g(a^-)\right]$.\\
    \hspace*{\algorithmicindent} \textbf{Return} Composition $ S_h = S_f \circ S_g $ consisting of a sequence of elements $ O = \left\{o_1, \dots, o_m\right\} $
    \begin{algorithmic}[1]
        \State Define an empty sequence of elements $ O \coloneqq \{\,\} $
        \State Let $ T $ be an empty, but ordered set of distinct rationals
        \State Let $ P_g $ be the set of points of $ S_g $
        \For{$ p_i $ \textbf{in} $ P_g $}
            \State Add the time $ t_i $ of $ p_i $ to $ T $
        \EndFor
        \State Let $ P_f $ be the set of points of $ S_f $, excluding the last point $g(a^-)$
        \For{$ \overline{p}_i $ \textbf{in} $ P_f $}
            \State Given time $ t_i $ of $ p_i $, add $ \overline{t}_i = \lpi{g}(t_i) $ to $ T $ \Comment{preserving the order in $ T $}
        \EndFor
        \For{each pair of consecutive $ (t_i, t_{i+1}) $ in $ T $}
            \State Append $ p \coloneqq \left(t_i, f(g(t_i))\right) $ to $ O $
            \State Append $ s \coloneqq \left(t_i, t_{i+1}, f\left(g(t_i^+)^+\right), f\left(g(t_{i+1}^-)^-\right)\right) $ to $ O $
        \EndFor
    \end{algorithmic}
\end{algorithm}

\subsection{By-curve algorithm for composition}
\label{subsec:Composition-Conclusions}

We can now discuss the by-curve algorithm, by combining the results presented in \Cref{subsec:Composition-Theorems,subsec:Composition-Algorithm}.
In \Cref{alg:pseudo-composition-curve} we show the pseudo-code to compute the composition $h = f \circ g$ of UPP functions $f$ and $g$, in the most general case.
The analogous for the more specialized cases, i.e., ultimately affine or ultimately constant operands, which here we omit for brevity, can be similarly derived by adjusting the parameters and domains computations.

\begin{algorithm}[ht]
    \caption{
        Pseudo-code for composition of UPP functions. 
    }
    \label{alg:pseudo-composition-curve}
    \hspace*{\algorithmicindent} \textbf{Input} Representation $R_f$ of a UPP function $f$, consisting of sequence $S_f$ and parameters $T_f$, $d_f$ and $c_f$; Representation $R_g$ of a non-negative and non-decreasing UPP function $g$, consisting of sequence $S_g$ and parameters $T_g$, $d_g$ and $c_g$ \\
    \hspace*{\algorithmicindent} \textbf{Return} Representation $R_{h}$ of $h = f \circ g$
    \begin{algorithmic}[1]
        \State Compute the UPP parameters for the result \Comment{\Cref{thm:UPP-composition}}
            \State\hspace{\algorithmicindent} $T_{h} \gets \max\left\{ \lpi{g}(T_{f}),T_{g}\right\}$
            \State\hspace{\algorithmicindent} $d_{h} \gets N_{d_{f}} \cdot d_{g}\cdot D_{c_{g}}$
            \State\hspace{\algorithmicindent} $c_{h} \gets D_{d_{f}}\cdot N_{c_{g}}\cdot c_{f}$
        \State Compute $S^{D_f}_f$ and $S^{D_g}_g$ \Comment{\Cref{eq:CompositionD}}
            \State\hspace{\algorithmicindent} $D_f \gets [g(0), g\left((T_h + d_h)^-\right)[$
            \State\hspace{\algorithmicindent} $S^{D_f}_f \gets \text{Cut}(R_f, D_f)$
            \State\hspace{\algorithmicindent} $D_g \gets [0, T_h + d_h[$
            \State\hspace{\algorithmicindent} $S^{D_g}_g \gets \text{Cut}(R_g, D_g)$
        \State Compute $S_{h} \gets S^{D_f}_f \circ S^{D_g}_g$ \Comment{\Cref{alg:pseudo-composition-seq}}
        \State $R_{h} \gets \left(S_{h}, T_{h}, d_{h}, c_{h}\right)$
    \end{algorithmic}
\end{algorithm}

Regarding the complexity of \Cref{alg:pseudo-composition-curve}, we note that the main cost is computing $S_{h} \gets S^{D_f}_f \circ S^{D_g}_g$.
Since \Cref{alg:pseudo-composition-seq} is a linear scan of $S^{D_f}_f$ and $S^{D_g}_g$, the resulting complexity is $\mathcal{O}\left(\card{S^{D_f}_f} + \card{S^{D_g}_g}\right)$.
Note that given the expressions in \Cref{thm:UPP-composition}, this computational cost highly depends on the numerical properties of the operands, i.e., numerators and denominators of UPP parameters, rather than simply the cardinalities of $R_f$ and $R_g$.
Thus, using the specialized properties of \Cref{prop:UPP-composition-g-UA,prop:UPP-composition-f-UA,prop:UA-composition} yields performance improvements, since $D_f$ and $D_g$ are smaller.

We remark again that the result of the composition may have a non-minimal representation (see the discussion at the end of \Cref{subsec:PseudoInverses-Theorems}). 
%which, as discussed in depth in \cite{ZippoEtAlii2022}, leads to significant performance degradation if one does not further process the result to obtain a minimal representation.

\section{Proof of Concept}
\label{sec:PoC}

In this section, we show how the algorithms computed in this paper allow one to replicate the result appeared in a recent NC paper \cite{Tabatabaee2021IWRR}.

The algorithms described in this paper, including variants and corner cases omitted for brevity, are implemented in the publicly available Nancy NC library \cite{Nancy22}.
Nancy is a C\# library implementing the UPP model and its operators, as described in \cite{bouillard2008algorithmic,Bouillard2018DNC}. 
Moreover, it implements state-of-the-art algorithms that improve the efficiency of NC operators, described in \cite{ZippoEtAlii2022}, and lower pseudo-inverse, upper pseudo-inverse and composition operators, described in this paper. 
Nancy makes extensive use of parallelism. 
However, the NC operators described in this paper are implemented as sequential.
%\raff{Nancy is parallel for most operators, but the ones described here are not (for now, at least). Should we still mention it?}

%Using this library one can apply the recent results of Network Calculus, limited only by the UPP model.
As a notable example, we implemented the results from \cite[Theorem~1]{Tabatabaee2021IWRR}, which uses the composition operator, using the same parameters of the example in \cite[Figure 3]{Tabatabaee2021IWRR}. 
The above theorem allows us to compute the service curve for a flow served by an Interleaved Weighted Round-Robin scheduler, once a) the weight of the flow; b) the minimum and maximum packet length for each flow, and c) the (strict) service curve for the entire aggregate of flows $\beta(t)$ are known. 
The complete formulation of the theorem -- which is rather cumbersome -- is postponed to \Cref{app-sec:SC-IWRR}. 
For the purpose of this proof of concept, the important bit is that computing the service curve of the flow involves computing a function $\gamma_i$ that takes into account flow $i$'s characteristics (e.g., weight, packet lengths), and then, given $\beta$ as the (strict) service curve of the server regulated by IWRR, compute the (strict) per-flow service curve for flow $i$ as $\beta_i = \gamma_i \circ \beta$.\footnote{Recall that composition requires the lower pseudo-inverse of the inner function to be computed, hence this example makes use of both the algorithms presented in this paper.}
In the example in \cite[Figure 3]{Tabatabaee2021IWRR}, $\beta$ is a constant-rate service curve, thus UA, while $\gamma_i$ is, in general, a UPP function. 
On the one hand, this confirms that limiting NC algorithms to UA curves only is severely constraining -- in this example, one could not compute flow $i$'s service curve without an algorithm that handles UPP curves. 
On the other hand, it means that we can obtain the same result by applying both \Cref{thm:UPP-composition} and its specialized version for UA inner functions \Cref{prop:UPP-composition-g-UA}, and that we can expect the latter to be more efficient due to the tighter $D_f$, as explained below \Cref{eq:CompositionD-g-UA}.

Our experiments confirm the above intuition. We run the computation on a laptop computer (i7-10750H, 32 GB RAM).
As shown in \Cref{table:iwrr_benchmark}, when using \Cref{thm:UPP-composition}, computing the result took a median of 1.11 seconds. 
On the other hand, using \Cref{prop:UPP-composition-g-UA} the same result is obtained in 0.55 milliseconds in the median, an improvement of three orders of magnitude.
\Cref{fig:iwrr_code,fig:iwrr_beta_i} report the code used and the resulting plot. 

\begin{table}[t]
	\centering
	\caption{Performance comparison of composition with and without UA optimization.}
	\label{table:iwrr_benchmark}
	\begin{tabular}{|l|c|c|}
		\hline
		Runtime & Not optimized & Optimized \\ \hline
		75th percentile & 1117.72 ms & 0.67 ms \\ \hline
		median & 1105.01 ms & 0.55 ms \\ \hline
		25th percentile & 1088.61 ms & 0.50 ms \\ \hline
	\end{tabular}
\end{table}

It is worth noting that \cite[Theorem~1]{Tabatabaee2022DRR} describes a similar result for the Deficit Round-Robin scheduler, under similar hypotheses, still making use of composition, with the outer curve being non-UA. 
The derivations in this section apply to this case as well, with minimal obvious modifications. 
Several other results in \cite{Tabatabaee2022DRR} make use of composition as well.

\begin{figure}
    \centering
    \includegraphics[width=0.87\linewidth]{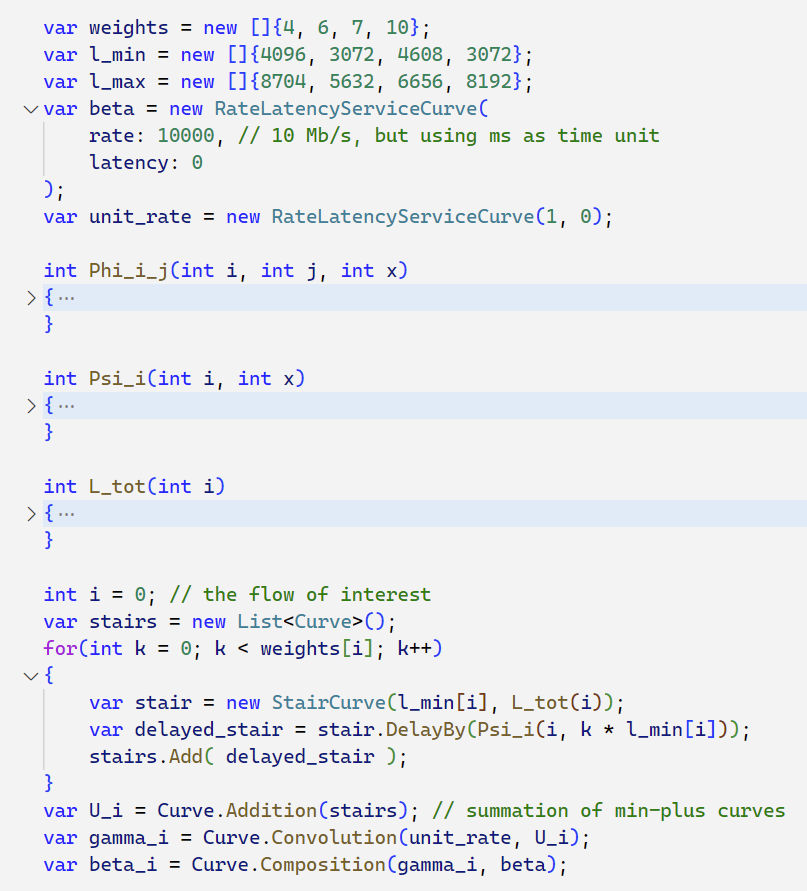}
    \caption{Code used to replicate the results of \cite[Theorem~1]{Tabatabaee2022DRR}.}
    \label{fig:iwrr_code}
\end{figure}

\begin{figure}
    \centering
    \includegraphics[width=0.92\linewidth]{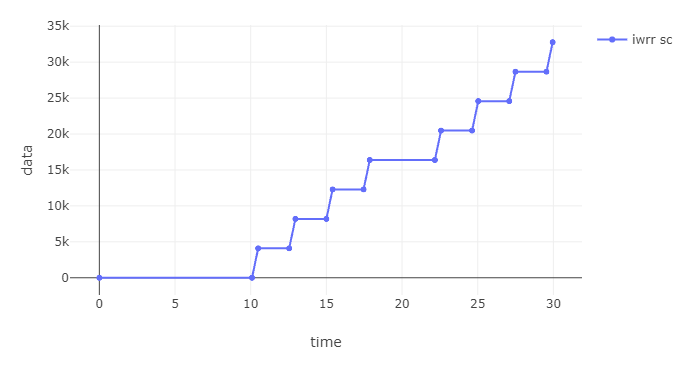}
    \caption{Plot of the resulting service curve $\beta_i$.}
    \label{fig:iwrr_beta_i}
\end{figure}

\section{Conclusions}
\label{sec:Conclusions}

Automated computation of Network Calculus operations is necessary to carry out analyses of non-trivial network scenarios. 
Therefore, algorithms that transform representations of operand functions into result functions are required for each "useful" NC operator. 
Recently, pseudo-inverses and composition operators have been repeatedly use in NC papers. 
To the best of our knowledge, these operators lacked an algorithmic description that would allow their implementation in software. 

This paper fills the above gap, by providing algorithms for lower and upper pseudo-inverses and composition of operators. 
We have presented algorithms that work under general assumptions (i.e., UPP operands), as well as specialized ones that leverage the fact that operands are UA (or UC) to compute results faster. 
We have discussed the complexity of these algorithms, as well as corner cases, with a rigorous mathematical exposition. 
Beside the theoretical contribution, we provided a practical one by including the above algorithms (together with the others known from the literature) in an open-source free library called Nancy. 
This allows researchers  to experiment with our results to study complex scenarios, or support the generation of novel theoretical insight.

Future work on this topic will include studying the computational and numerical properties of NC operators. 
As the example in \Cref{sec:PoC} shows (not to mention those in \cite{ZippoEtAlii2022}, by the same authors), exploiting more knowledge on the operands allows one to compute the same results via specialized versions of the algorithms, often in considerably shorter times (some by orders of magnitude). 
We believe that this is an avenue of research worth pursuing, with the aim of enabling larger-scale real-world performance studies.

\section{Acknowledgements}

This work was partially supported by the Italian Ministry of Education and Research (MIUR) in the framework of the CrossLab project (Departments of Excellence), and by the University of Pisa, through grant “Analisi di reti complesse: dalla teoria alle applicazioni” - PRA 2020.

\FloatBarrier

\bibliography{biblio.bib}

\newpage
\appendix

\section{Properties of Ultimately Affine Functions}
\label{app-sec:properties-UAF}

% \begin{prop}
%     \label{prop:UA-for-T-prime} Let $f$ be a ultimately affine function
%     $\mathbb{Q}^{+}\rightarrow\mathbb{Q}$ with $T_{f}^{a}$ and $\rho_{f}$.
%     Then it holds that
%     \begin{equation}
%     f(t+T')=f(T')+\rho_{f}\cdot t\quad\forall t>0\forall T'\geq T_{f}^{a}.\label{eq:UA-for-T-prime}
%     \end{equation}
% \end{prop}
    
% \begin{proof}
%     We calculate
%     \begin{align*}
%     f(t+T') & =f(t+T'-T_{f}^{a}+T_{f}^{a})\\
%      & =f(T_{f}^{a})+\rho_{f}\cdot\left(t+T'-T_{f}^{a}\right)\\
%      & =f(T_{f}^{a})+\rho_{f}\cdot\left(T'-T_{f}^{a}\right)+\rho_{f}\cdot t\\
%      & =f(T')+\rho_{f}\cdot t.
%     \end{align*}
%     This concludes the proof.
% \end{proof}
    
\begin{prop}
    \label{prop:UA-equivalence} 
    A function $f$ is ultimately affine iff there exist $T\in\mathbb{Q}^{+},\sigma,\rho\in\mathbb{Q}$ such that 
    \begin{equation}
        f(t)\in\left\{ \rho t+\sigma,-\infty,+\infty\right\} \quad\forall t>T.\label{eq:UA-property-toolbox}
    \end{equation}
\end{prop}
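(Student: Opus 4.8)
The plan is to prove the two implications separately; the core of each is a one-line conversion between the \emph{anchored} affine form used in \Cref{def:ultimately-affine} and the \emph{slope--intercept} form of \eqref{eq:UA-property-toolbox}, so this appendix is really about reconciling a (cosmetic) discrepancy with the literature definition rather than proving something deep.

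For the forward direction, I would assume $f$ is ultimately affine in the sense of \Cref{def:ultimately-affine}, with parameters $T_f^a \in \mathbb{Q}^+$ and $\rho_f \in \mathbb{Q}$, and simply rewrite \eqref{eq:ultimately-affine} as $f(t) = \rho_f t + \bigl(f(T_f^a) - \rho_f T_f^a\bigr)$ for all $t \ge T_f^a$. Setting $T := T_f^a$, $\rho := \rho_f$ and $\sigma := f(T_f^a) - \rho_f T_f^a$ (which lies in $\mathbb{Q}$ because $f$ is $\mathbb{Q}$-valued on $\mathbb{Q}^+$) immediately yields \eqref{eq:UA-property-toolbox}, the affine alternative of the set being selected for every $t > T$. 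If one also wishes to admit functions that are eventually identically $+\infty$ or eventually identically $-\infty$ as degenerate ultimately-affine functions -- which is what is needed to match the extended range $\mathbb{Q} \cup \{-\infty, +\infty\}$ appearing in \eqref{eq:UA-property-toolbox} -- these map directly onto the remaining two alternatives with any $\rho, \sigma$.

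For the converse, I would assume \eqref{eq:UA-property-toolbox} holds with some $T \in \mathbb{Q}^+$ and $\sigma, \rho \in \mathbb{Q}$. Reading the ``$\in\{\cdot\}$'' notation the same way it is used for piecewise-affine functions -- where it selects one of the three alternatives on a whole interval rather than pointwise -- this says that $f$ restricted to $]T, \infty[$ coincides with exactly one of $t \mapsto \rho t + \sigma$, $t \mapsto -\infty$, $t \mapsto +\infty$. In the affine case, I would pick any rational $T_f^a > T$; then $f(T_f^a) = \rho T_f^a + \sigma$, and for every $t \ge T_f^a$ one has $f(t) = \rho t + \sigma = f(T_f^a) + \rho\,(t - T_f^a)$, which is exactly \eqref{eq:ultimately-affine} with $\rho_f := \rho$, so $f$ is ultimately affine. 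The two constant-infinity cases are again the degenerate ultimately-affine functions (or simply outside the scope of \Cref{def:ultimately-affine} if one insists on $\mathbb{Q}$-valued $f$).

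The ``hard part'' is not analytic but bookkeeping over the two definitions: \Cref{def:ultimately-affine} is stated for $f : \mathbb{Q}^+ \to \mathbb{Q}$ and anchors the affine law at a chosen $T_f^a$ with a non-strict inequality $t \ge T_f^a$, whereas \eqref{eq:UA-property-toolbox} admits the values $\pm\infty$ and quantifies strictly over $t > T$. I would handle this by (i) fixing once and for all the convention that ``ultimately $+\infty$'' and ``ultimately $-\infty$'' count as degenerate ultimately-affine functions, so the ranges line up, and (ii) always choosing $T_f^a$ strictly above $T$, so that the region $t \ge T_f^a$ sits inside the region $t > T$ where \eqref{eq:UA-property-toolbox} is in force and no separate endpoint argument is required.
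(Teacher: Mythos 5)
Your proof is correct and follows essentially the same route as the paper's: rewrite the anchored form of \Cref{def:ultimately-affine} in slope--intercept form for the forward direction, and re-anchor the slope--intercept form for the converse, with the $\pm\infty$ branches dismissed as trivial. In fact your bookkeeping is slightly more careful than the paper's own proof, which writes $\sigma = f(T_f^a)$ where it should be $f(T_f^a) - \rho_f T_f^a$ and re-anchors at $T_f^a = T$ even though \eqref{eq:UA-property-toolbox} is only assumed for $t > T$; your choices of $\sigma := f(T_f^a) - \rho_f T_f^a$ and $T_f^a > T$ repair both of these minor slips.
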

    
\begin{proof}
    ``$\Rightarrow$''\\
    Let $f$ be ultimately affine. Define $T=T_{f}^{a},$ $\sigma=f(T_{f}^{a})$ and $\rho=\rho_{f}$. 
    Then, it holds for all $t>T$ that
    \begin{equation*}
        f(t)=f\left(t-T_{f}^{a}+T_{f}^{a}\right)=f(T_{f}^{a})+\rho_{f}\cdot t=\sigma+\rho t.
    \end{equation*}
    
    ``$\Leftarrow$''
    
    Assume $f$ to verify the condition in Equation~(\ref{eq:UA-property-toolbox}).
    If $f(t)\in\left\{ -\infty,+\infty\right\} ,$ the proof is trivial.
    Let us therefore assume that $f(t)=\rho t+\sigma$ for all $t>T$.
    Define $T_{f}^{a}=T,\rho_{f}=\rho$. Then for all $t>0$
    \begin{equation*}
    f(t+T_{f}^{a})=\sigma+\rho_{f}\left(t+T_{f}^{a}\right)=\left(\rho_{f}T_{f}^{a}+\sigma\right)+\rho_{f}t=f(T_{f}^{a})+\rho_{f}t.
    \end{equation*}
    This concludes the proof.
\end{proof}

\section{Calculation of Lower Pseudo-Inverses}
\label{app-sec:PseudoInverses-Formalization}

We report here the rigorous mathematical derivations for cases c1-c8 in \Cref{table:cases}. 

\subsection*{Point after segment (cases c1-c4)}

In these cases we have, in general, an $f$ such that
\begin{equation*}
    f(x)=\begin{cases}
        b_{1} + \rho \left(x - t_{1}\right), & \text{if }t_{1}<x<t_{2},\\
        b_{2}, & \text{if }x\to t_{2}^{-}\\
        b_{3} & \text{if }x=t_{2}
    \end{cases}
\end{equation*}
Since $f$ is non-decreasing, $b_{1} + \rho \left(x - t_{1}\right)\leq b_{2}\leq b_{3}$ for all $x\in\left]t_{1},t_{2}\right[$. 

We then classify based on two properties:
\begin{itemize}
    \item Whether the segment is constant, i.e., $\rho = 0 \rightarrow b_1 = b_2$
    \item Whether there is a discontinuity at $t_2$, i.e., $b_2 < b_3$
\end{itemize}

\paragraph{Case c1: $\rho=0$ and $b_{1}=b_{2}<b_{3}$ (constant segment followed by a discontinuity). }

It holds that
\begin{equation}
    \lpi{f}(y)=\begin{cases}
        \inf\left\{ x\mid f(x)\geq y\right\} =t_{2}, & \text{if }b_{1}<y<f(t_{2})=b_{2},\\
        \inf\left\{ x\mid f(x)\geq\underbrace{y}_{=b_{2}}\right\} =t_{2}, & \text{if }y=f(t_{2})=b_{2}.
    \end{cases}
    \label{eq:lpi-sp-const-jump}
\end{equation}
and
\begin{equation}
    \upi{f}(y)=\begin{cases}
        \sup\left\{ x\mid f(x)\leq\overbrace{y}^{=b}\right\} =t_{2}, & \text{if }y=b_{1}=f(t_{1}^{+}),\\
        \sup\left\{ x\mid f(x)\leq y\right\} =t_{2}, & \text{if }b_{1}=f(t_{1}^{+})<y<f(t_{2}),\\
        \inf\left\{ x\mid\underbrace{f(x)}_{=b_{2}}>y\right\} =\sup\left\{ x\mid f(x)\leq y\right\} =t_{2} & \text{if }y=f(t_{2})=b_{2}.
    \end{cases}
    \label{eq:upi-sp-const-jump}
\end{equation}

\paragraph{Case c2: $\rho=0$ and $b_{1}=b_{2}=b_{3}$ (constant segment without any discontinuity). }

It holds that
\begin{equation}
    f^{\downarrow}(y)=\inf\left\{ x\mid\overbrace{f(x)}^{=b_{1}}\geq y\right\} =t_{1}, \text{if }y=b_{1}
    \label{eq:lpi-sp-const-no-jump}
\end{equation}
(however, we do not add a value as it is processed in the ``segment after point''-section) and
\begin{equation}
    \upi{f}(y)\coloneqq
    \sup\left\{ x\mid\underbrace{f(x)}_{=b_{1}}\leq y\right\} =t_{1}, \text{if }y=b_{1}
    \label{eq:upi-sp-const-no-jump}
\end{equation}

\paragraph{Case c3: $\rho>0$ and $b_{2}<b_{3}$ (non-constant segment followed by a discontinuity).}

\begin{equation}
    \lpi{f}(y)=\begin{cases}
        \inf\left\{ x\mid f(x)\geq y\right\} =t_{2}, & \text{if }y=b_{1}+r\left(t_{2}-t_{1}\right)=f(t_{2}^{-})\\
        \inf\left\{ x\mid f(x)\geq y\right\} =t_{2}, & \text{if }f(t_{2}^{-})<y<f(t_{2})=b_{3}\\
        \inf\left\{ x\mid f(x)\geq\underbrace{y}_{b_{3}}\right\} =t_{2}, & y=f(t_{2})=b_{3}
    \end{cases}
    \label{eq:lpi-sp-inc-jump}
\end{equation}
and
\begin{equation}
    \upi{f}(y)=\begin{cases}
        \inf\left\{ x\mid f(x)>y\right\} =t_{2}, & \text{if }y=b_{1}+r\left(t_{2}-t_{1}\right)=f(t_{2}^{-})\\
        \inf\left\{ x\mid f(x)>y\right\} =t_{2}, & \text{if }f(t_{2}^{-})<y<f(t_{2})=b_{3}\\
        \inf\left\{ x\mid f(x)>\underbrace{y}_{b_{3}}\right\} =t_{2}, & y=f(t_{2})=b_{3}.
    \end{cases}
    \label{eq:upi-sp-inc-jump}
\end{equation}

\paragraph{Case c4: $\rho>0$ and $b_{2}=b_{3}$ (non-constant segment without any discontinuity).}

\begin{equation}
    \lpi{f}(y)=\inf\left\{ x\mid f(x)\geq\underbrace{y}_{b_{2}}\right\} =t_{2},  y=f(t_{2})=b_{2}
    \label{eq:lpi-sp-inc-no-jump}
\end{equation}
and
\begin{equation}
    \upi{f}(y)=\inf\left\{ x\mid f(x)>\underbrace{y}_{b_{2}}\right\} =t_{2},  y=f(t_{2})=b_{2}.
    \label{eq:upi-sp-inc-no-jump}
\end{equation}

\subsection*{Segment after point (cases 5-8)}

In these cases we have, in general, an $f$ such that
\begin{equation*}
f(x)=\begin{cases}
    b_{1}, & x=t_{1},\\
    b_{2}, & x\to t_{1}^{+},\\
    b_{2} + \rho \left(x-t_{1}\right), & t_{1}<x<t_{2}.
\end{cases}
\end{equation*}

We then classify based on two properties:
\begin{itemize}
    \item Whether the segment is constant, i.e., $\rho = 0 \rightarrow b_2 = b_3$
    \item Whether there is a discontinuity at $t_1$, i.e., $b_1 \neq b_2$
\end{itemize}

\paragraph{Case c5: $b_1<b_2=b_3$ and $\rho=0$ (discontinuity followed by a constant segment). }

It holds that
\begin{equation}
    \lpi{f}(y)=\begin{cases}
        \inf\left\{ x\mid f(x)\geq y\right\} =\sup\left\{ x\mid f(x)<y\right\} = t_{1}, & \text{if } b_1 < y < f(t_{1}^{+}) = b_2,\\
        \inf\left\{ x\mid f(x)\geq\underbrace{y}_{=b_1}\right\} = t_{1}, & \text{if } y = f(t_{1}^{+}) = b_2
    \end{cases}
    \label{eq:lpi-ps-const-jump}
\end{equation}
and
\begin{equation}
    \upi{f}(y)=\begin{cases}
        \inf\left\{ x\mid f(x)>y\right\} =\sup\left\{ x\mid f(x)\leq y\right\} = t_{1}, & \text{if } b_1 < y < f(t_{1}^{+}) = b_2,\\
        \inf\left\{ x\mid f(x)>y\right\} =\sup\left\{ x\mid f(x)\leq\underbrace{y}_{=b_1}\right\} = t_2, & \text{if } y = f(t_{1}^{+}) = b_2
    \end{cases}
    \label{eq:upi-ps-const-jump}
\end{equation}

\paragraph{Case c6: $b_1=b_2=b_3$ and $\rho=0$ (no discontinuity and a constant segment). }

Then it holds that
\begin{equation}
    \lpi{f}(y)=\inf\left\{ x\mid\overbrace{f(x)}^{=b_{1}}\geq y\right\} =t_{1}, \text{if }y=b_{1}
    \label{eq:lpi-ps-const-no-jump}
\end{equation}
(however, we do not add a value as it is processed in the ``point after segment''-section) and
\begin{equation}
    \upi{f}(y)\coloneqq \sup\left\{ x\mid\underbrace{f(x)}_{=b_{1}}\leq y\right\} =t_{2},  \text{if }y=b_{1}.
    \label{eq:upi-ps-const-no-jump}
\end{equation}

\paragraph{Case c7: $b_1<b_2$ and $\rho>0$ (discontinuity followed by a non-constant segment).}

Then, 
\begin{equation}
    \lpi{f}(y)=\begin{cases}
        \inf\left\{ x\mid f(x)\geq y\right\} = t_{1}, & \text{if }b_{1}<y<f(t_{1}^{-})=b_{2},\\
        \inf\left\{ x\mid f(x)\geq\underbrace{y}_{=b_2}\right\} = t_{1}, & \text{if } y = f(t_{1}^{-}) = b_{2}\\
        \inf\left\{ x\mid b_{2} + \rho \left(x - t_1\right) \geq y\right\} = \frac{y-b_{2}}{\rho}, & \text{if } b_{2} < y < b_{3}
    \end{cases}
    \label{eq:lpi-ps-inc-jump}
\end{equation}
and
\begin{equation}
    \upi{f}(y)=\begin{cases}
        \inf\left\{ x\mid f(x)>y\right\} = t_{1}, & \text{if } b_1 < y < b_2,\\
        \inf\left\{ x\mid f(x)>y\right\} =\sup\left\{ x\mid f(x)\leq\underbrace{y}_{=b_2}\right\} =t_{1}, & \text{if }y = b_2\\
        \inf\left\{ x\mid b_{2} + \rho \left(x - t_1\right) \right\} =\frac{y-b_2}{\rho}, & \text{if } b_2 < y < b_3.
    \end{cases}
    \label{eq:upi-ps-inc-jump}
\end{equation}

\paragraph{Case c8: $b_1=b_2$ and $\rho>0$ (no discontinuity and non-constant segment). }

Then, 
\begin{equation}
    \lpi{f}(y)=\inf\left\{ x\mid b_{1} + \rho \left(x - t_1\right) \geq y\right\} =\frac{y-b_{1}}{\rho}, \text{if } b_{1} = f(t_{1}^{+})<y<f(t_{2}^{-}) = b_{2}
    \label{eq:lpi-ps-inc-no-jump}
\end{equation}
and
\begin{equation}
    \upi{f}(y)=\inf\left\{ x\mid b_{1} + \rho \left(x - t_1\right) > y\right\} =\frac{y-b_{1}}{\rho}, \text{if } b_1 = f(t_{1}^{+})<y<f(t_{2}^{-}) = b_{2}.
    \label{eq:upi-ps-inc-no-jump}
\end{equation}

\section{Composition of Ultimately Constant Functions}
\label{app-sec:comp-UCF}

\begin{prop}
    Let $f$ and $g$ be two ultimately pseudo-periodic (UPP) functions that are not wUI with $g$ being non-negative, non-decreasing and ultimately constant (UC).
    Then, their composition $h\coloneqq f\circ g$ is again UC with
    \begin{equation}
        T_{h} = T_{g}.
    \end{equation}
\end{prop}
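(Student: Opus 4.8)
The plan is to exploit the fact that the \emph{ultimately constant} hypothesis is the strongest among those considered so far: since $g$ is UC, beyond its transient it is not merely pseudo-periodic but literally constant, so the outer function $f$ is only ever evaluated at a single point and the composition collapses to a constant. Following the pattern of the proof of \Cref{prop:UA-composition}, I would first dispose of the degenerate subcases: if $f$ is ultimately $+\infty$ or $-\infty$ the claim is immediate, so I assume $f$ takes finite values; recall also that, by hypothesis, $g$ is not wUI and hence finite everywhere.

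The main steps are then the following. First, put $v \coloneqq g(T_{g})$; since $g$ is UC it is UA with $\rho_{g}=0$, so \Cref{def:ultimately-affine} together with \eqref{eq:ultimately-affine} gives $g(t)=g(T_{g})+0\cdot(t-T_{g})=v$ for all $t\ge T_{g}$. Second, compose: for every $t\ge T_{g}$ we obtain
\begin{equation*}
    h(t)=f\left(g(t)\right)=f(v),
\end{equation*}
a value independent of $t$. Third, read off the conclusion: $h$ is constant for all $t\ge T_{g}$, hence $h$ is UA with slope $\rho_{h}=0$ and transient length $T_{h}^{a}=T_{g}$; in particular $h$ is UC and $T_{h}=T_{g}$ is a valid UPP start, as claimed (with $c_{h}=0$ and any $d_{h}>0$).

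I do not expect a genuine obstacle here: the argument is essentially a one-line substitution, and the apparent intricacy of \Cref{thm:UPP-composition} disappears entirely because $g$ has a trivial periodic behaviour. The only points needing a word of care are (i) the corner case $f(v)\in\{+\infty,-\infty\}$, dispatched by the trivial remark above, and (ii) noting that $T_{g}$, even if only a valid rather than minimal UPP start for $g$, is inherited as a valid UPP start for $h$. This proposition is precisely the case that \Cref{prop:UPP-composition-g-UA} had to exclude (there $\rho_{g}$ sits in the denominator of $d_{h}$, see \eqref{eq:d_h-g-UA}), and it is consistent with \Cref{prop:UA-composition} specialised to $\rho_{g}=0$, which likewise yields $\rho_{h}=\rho_{f}\cdot\rho_{g}=0$.
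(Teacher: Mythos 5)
Your proof is correct and follows essentially the same route as the paper's: for $t\ge T_g$ the constancy of $g$ gives $h(t)=f(g(t))=f(g(T_g))$, so $h$ is ultimately constant from $T_h=T_g$. The extra remarks on the $f(v)\in\{+\infty,-\infty\}$ corner case and on consistency with \Cref{prop:UA-composition} are sound but not needed for the argument.
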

\begin{proof}
    For $ t \geq T_g $, it holds that
    \begin{equation*}
        h(t) = f(g(t)) = f(g(T_g)) = h(T_h).
    \end{equation*}
\end{proof}

\begin{prop}
    Let $f$ be UC and $g$ be UPP function that is non-negative and non-decreasing. 
    Then, their composition $h\coloneqq f\circ g$ is again UC with
    \begin{equation}
        T_{h} = \lpi{g}(T_f).
    \end{equation}
\end{prop}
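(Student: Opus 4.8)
The plan is to mirror the short arguments already used for the other ultimately-constant corner cases, since this is structurally the simplest of the composition results. Since $f$ is UC, it is UA with $\rho_f=0$, so by \Cref{def:ultimately-affine} (equivalently, by \Cref{prop:UA-equivalence}) we have $f(y)=f(T_f)$ for all $y\ge T_f$. The target is to exhibit an abscissa $T_h$ beyond which $h=f\circ g$ is constant; I claim $T_h=\lpi{g}(T_f)$ works, and the whole argument reduces to showing that $g(t)\ge T_f$ for every $t\ge T_h$.

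First I would recall the lower-pseudo-inverse characterization already invoked in the proof of \Cref{thm:UPP-composition} (citing \cite[p.~62]{Liebeherr2017}): because $g$ is non-decreasing, $g(t)\ge T_f$ holds if and only if $t\ge\lpi{g}(T_f)$. Then, fixing any $t\ge T_h=\lpi{g}(T_f)$, this equivalence yields $g(t)\ge T_f$; since $g$ is non-negative, $g(t)$ lies in the domain of $f$, and since $f$ is constant on $[T_f,+\infty[$ we obtain $h(t)=f(g(t))=f(T_f)$. Applying this at $t=T_h$ as well gives $h(t)=f(T_f)=h(T_h)$ for all $t\ge T_h$, which is exactly the defining property of a UC (i.e.\ UA with $\rho_h=0$) function with $T_h^{a}=\lpi{g}(T_f)$.

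There is no real obstacle here; the only points that warrant a line of care are (i) the boundary abscissa $t=\lpi{g}(T_f)$ itself, where one must appeal to the pseudo-inverse equivalence rather than a naive infimum argument, exactly as in \Cref{thm:UPP-composition}; and (ii) the degenerate situation in which $g$ never reaches $T_f$, so that $\lpi{g}(T_f)=+\infty$ --- in that case $g$ is itself bounded and, being UPP and non-decreasing, ultimately constant, so $h$ is UC by the preceding proposition for a UC inner function, and the formula is to be understood as applying when this abscissa is finite. I would also remark in passing that, unlike the earlier composition results, the hypothesis ``$g$ not wUI'' is unnecessary: if $g(t)=+\infty$ for large $t$, then $f(g(t))=\lim_{y\to+\infty}f(y)=f(T_f)$ because $f$ is UC, so the conclusion still holds.
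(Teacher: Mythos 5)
Your proof is correct and follows essentially the same route as the paper's, which simply observes that for $t\ge\lpi{g}(T_f)$ one has $g(t)\ge T_f$ and hence $h(t)=f(g(t))=f(T_f)=h(T_h)$; you merely spell out the pseudo-inverse equivalence and the degenerate cases ($\lpi{g}(T_f)=+\infty$, $g$ wUI) that the paper leaves implicit. No issues.
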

\begin{proof}
    For $ t \geq \lpi{g}(T_f) $, it holds that
    \begin{equation*}
        h(t) = f(g(t)) = f(T_f) = h(T_h).
    \end{equation*}
\end{proof}

\begin{prop}
    Let $f$ and $g$ be UC functions with $g$ being non-negative, non-decreasing. 
    Then, their composition $h\coloneqq f\circ g$ is again UC with
    \begin{equation}
        T_{h} = \min\left\{T_f, \lpi{g}(T_f)\right\}.
    \end{equation}
\end{prop}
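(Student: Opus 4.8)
The plan is to reduce the claim to the two preceding propositions instead of re-running a pseudo-periodicity computation. Recall that a UC function is exactly an UPP function that equals its terminal value from its ultimately-affine start point onward, so it suffices to produce a rational $T_h$ with $h(t)=h(T_h)$ for all $t\ge T_h$, where $h(t)=f\bigl(g(t)\bigr)$.

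I would invoke the two ``one-sided'' arguments in turn. From $g$ being UC: for every $t\ge T_g$ we have $g(t)=g(T_g)$, hence $h(t)=f\bigl(g(T_g)\bigr)$ is constant, so $h$ is UC with start point $T_g$. From $f$ being UC, together with $g$ non-decreasing and non-negative: writing $\lpi{g}(T_f)=\inf\{x\mid g(x)\ge T_f\}$, monotonicity of $g$ gives $g(t)\ge T_f$ for every $t>\lpi{g}(T_f)$, whence $h(t)=f\bigl(g(t)\bigr)=f(T_f)$ there, so $h$ is also UC with start point $\lpi{g}(T_f)$. Taking whichever of the two start points is smaller, $h$ is constant on $[\,\min\{T_g,\lpi{g}(T_f)\},+\infty)$; the value at the left endpoint is then checked to coincide with that constant by the same boundary argument used in the $f$-UC proposition, i.e.\ by treating separately the case in which the infimum defining $\lpi{g}(T_f)$ is not attained.

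The step I expect to need the most care is reconciling the combined threshold above with the closed form $\min\{T_f,\lpi{g}(T_f)\}$ stated in the proposition. The clean regime is $\lpi{g}(T_f)\le T_f$: there $\min\{T_f,\lpi{g}(T_f)\}=\lpi{g}(T_f)$, and since a UC $g$ is bounded above by $g(T_g)$ one also has $\lpi{g}(T_f)\le T_g$ whenever $g$ attains $T_f$, so both expressions collapse to $\lpi{g}(T_f)$ and the $f$-UC bound alone suffices. The regime $\lpi{g}(T_f)>T_f$ — including the sub-case $\lpi{g}(T_f)=+\infty$, where $g$ stays strictly below $T_f$ — is the delicate one I would scrutinize most closely, since the direct combination of the two earlier propositions yields $\min\{T_g,\lpi{g}(T_f)\}$ rather than the $\min\{T_f,\lpi{g}(T_f)\}$ recorded here; this is the one point where I would expect to need more than a routine splicing of the two previous proofs, and where I would double-check the exact value attributed to $T_h$.
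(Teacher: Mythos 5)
Your approach is exactly the paper's: its entire proof of this proposition reads ``simply a combination of the previous two propositions,'' which is precisely the splicing you carry out (UC $g$ gives threshold $T_g$; UC $f$ with $g$ non-decreasing gives threshold $\lpi{g}(T_f)$, with the usual care at the boundary point where the infimum may not be attained). Your suspicion about the stated threshold is also well-founded: the combination yields $T_h=\min\{T_g,\lpi{g}(T_f)\}$, and the $T_f$ printed in the statement appears to be a misprint for $T_g$ --- e.g.\ with $f(y)=\min(y,1)$ and $g(t)=\min(t/10,1/2)$ one has $T_f=1$, $T_g=5$, $\lpi{g}(T_f)=+\infty$, yet $h=g$ is not constant on $[1,5)$, so $\min\{T_f,\lpi{g}(T_f)\}=1$ is not a valid start of the constant part. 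Your proof therefore establishes the corrected statement, which is the one the paper's own one-line argument actually supports.
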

\begin{proof}
    The proof is simply a combination of the previous two propositions.
%	For $ t \geq \min\left\{T_f, \lpi{g}(T_f)\right\} $, it holds that
%	\begin{equation*}
%		h(t) = f(g(t)) = \begin{cases}
%			f(g(T_g)) = h(T_h), & \text{if } T_f \leq \lpi{g}(T_f)\\
%			f(T_f) = h(T_h), & \text{if } \lpi{g}(T_f) < T_f.
%		\end{cases}.
%	\end{equation*}
\end{proof}

\section{Service Curve of a Flow in Interleaved Weighted Round Robin}
\label{app-sec:SC-IWRR}

We report here the statement of Theorem~1 in \cite{Tabatabaee2021IWRR}, for ease of reference. 
We slightly rephrased it to aid comprehension. 

\begin{thm}[Strict Per-Flow Service Curves for IWRR]
	Assume $n$ flows arriving at a server performing interleaved weighted round robin (IWRR) with weights $w_{1},\dots,w_{n}$.
    Let $ l_i^{\min} $ and $ l_j^{\max} $ denote the minimum and maximum packet size of the respective flow.
	Let this server offer a super-additive strict service curve $\beta$ to these $n$ flows. 
	Then, 
	\begin{equation*}
		\beta^i(t) \coloneqq \gamma_i\left(\beta(t)\right),
	\end{equation*}
	is a strict service curve for flow $f_{i}$, where
	\begin{align*}
		\gamma_i(t)  &\coloneqq \conv{\beta_{1,0}}{U_i}{t}, \nonumber\\
		U_i (t)  &\coloneqq  \sum_{k=0}^{w_i - 1} \nu_{l_i^{\min},L_\mathrm{tot}}\left(\posPart{t - \psi\left(k l_i^{\min}\right)}\right), \nonumber\\
		L_\mathrm{tot}  &\coloneqq  w_i l^{\min}_i + \sum_{j: j \neq i} {w_j l^{\max}_j}, \nonumber\\
		\psi_i(x)  &\coloneqq  x + \sum_{j\neq i} \phi_{ij} \left(\left\lfloor \frac{x}{l_{i}^{\min}}\right\rfloor \right) l_j^{\max}, \nonumber\\
		\phi_{ij}(p)  &\coloneqq  \left\lfloor \frac{p}{w_{i}}\right\rfloor w_{j}+\posPart{w_{j}-w_{i}} +\min\left\{ \left(p\mod w_{i}\right)+1,w_{j}\right\}, \label{eq:phi-ij}
	\end{align*}
    $\beta_{1,0}$ is a constant-rate function with slope 1, and the stair function $ \nu_{h,P}(t) $ is defined as 
    \begin{equation*}
        \nu_{h, P}(t)\coloneqq h\left\lceil \frac{t}{P}\right\rceil, \qquad\text{for }t\geq0.
        \label{eq:stair-function}
    \end{equation*}
    % \paul{note that $\phi_{i,j}$ only takes integer arguments (the number of full completions which is why we used $p$) and $\psi_i$ takes data as an argument}
\end{thm}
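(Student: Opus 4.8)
This statement is quoted (up to rephrasing) from \cite{Tabatabaee2021IWRR}, so the proof is not ours and I would ultimately defer to the one given there; but its structure can be sketched as follows. The plan is to fix an arbitrary flow $f_i$ and an arbitrary backlogged period $\left]s,t\right]$ for that flow, and to lower-bound the service $W_i(t)-W_i(s)$ that IWRR grants to $f_i$ over this period as a non-decreasing function $\gamma_i$ of the aggregate work the server performs in the same period. Since a backlogged period of $f_i$ is also a backlogged period of the whole aggregate, and $\beta$ is a \emph{strict} service curve for the aggregate, the server performs at least $\beta(t-s)$ units of work in $\left]s,t\right]$; so once the bound $W_i(t)-W_i(s)\ge\gamma_i(v)$ is established with $v$ the aggregate work, setting $v=\beta(t-s)$ yields $W_i(t)-W_i(s)\ge\gamma_i(\beta(t-s))=\beta^i(t-s)$, which is exactly the strict-service-curve property for $f_i$.

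First I would make the combinatorial core precise. IWRR serves packets in \emph{rounds}; within a round, each still-active flow $j$ is granted a bounded number of one-packet transmission \emph{opportunities}, in an interleaved order fixed by the weights $w_1,\dots,w_n$. The worst case for $f_i$ is when every competitor $j\neq i$ always has a packet of its maximum size $l_j^{\max}$ ready whenever scheduled, while $f_i$'s own packets are of minimum size $l_i^{\min}$. The function $\phi_{ij}(p)$ is then precisely the maximum number of opportunities flow $j$ can receive while flow $i$ accumulates its first $p$ opportunities: $\left\lfloor p/w_i\right\rfloor w_j$ from the complete rounds, $\posPart{w_j-w_i}$ because $j$ may be scheduled ahead of $i$ inside a round, and $\min\{(p\bmod w_i)+1,w_j\}$ from the partial round in progress. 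Translating opportunities into bytes (at most $l_j^{\max}$ per competitor opportunity), $\psi_i(x)$ is an upper bound on the aggregate work the server can expend before $f_i$ has been offered $x$ bytes' worth of its own opportunities.

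With $\psi_i$ in hand, I would \emph{invert} the relation. Each opportunity delivers at least $l_i^{\min}$ bytes to $f_i$, and $f_i$'s $k$-th quantum has been offered once the aggregate has done $\psi_i(k\,l_i^{\min})$ units of work; hence, as a function of aggregate work $v$, the cumulative service to $f_i$ is at least $U_i(v)=\sum_{k=0}^{w_i-1}\nu_{l_i^{\min},L_{\mathrm{tot}}}\!\left(\posPart{v-\psi_i(k\,l_i^{\min})}\right)$, where $L_{\mathrm{tot}}$ is the worst-case aggregate work per full round and the stair function $\nu$ records the quantized nature of packet service. Convolving with $\beta_{1,0}$ replaces this staircase by its largest slope-$1$ lower envelope, giving the non-decreasing curve $\gamma_i=\conv{\beta_{1,0}}{U_i}{\cdot}$; the substitution $v=\beta(t-s)$ then closes the argument. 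Super-additivity of $\beta$ is used where the per-round accounting is aggregated over the whole backlogged period (it guarantees the combined bound is not weakened by splitting the period into rounds), and is the reason $\beta^i=\gamma_i\circ\beta$ remains a valid strict service curve.

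The main obstacle is the combinatorial lemma behind $\phi_{ij}$: one must pin down the exact worst-case interleaving order of IWRR tightly enough that the competitor-opportunity count cannot be exceeded, while handling the boundary effect that at the start of the backlogged period $f_i$ may be mid-round -- this is where the $\posPart{w_j-w_i}$ head-start term and the use of $\nu$ (rather than a plain affine curve) originate. The remaining steps -- converting opportunity counts into aggregate bytes via the $l_i^{\min}$ lower bound and the $l_j^{\max}$ upper bounds, and checking that $\gamma_i$ is non-decreasing so the substitution $v=\beta(t-s)$ is legitimate -- are routine by comparison. For the purposes of the present paper, the only features that matter are that $\gamma_i$ is a concrete UPP function and that $\beta^i=\gamma_i\circ\beta$ is a composition, so that \Cref{thm:UPP-composition} and its specializations apply.
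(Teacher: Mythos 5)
This theorem is not proved in the present paper --- it is restated (slightly rephrased) from \cite{Tabatabaee2021IWRR} in \Cref{app-sec:SC-IWRR} purely for reference, so there is no in-paper proof to compare against, and your deferral to the original source is the right call. Your sketch of that original argument --- lower-bounding flow $i$'s service over a backlogged period as a non-decreasing function $\gamma_i$ of the aggregate work, with $\phi_{ij}$ counting worst-case competitor opportunities, $\psi_i$ converting them to aggregate bytes, and the strict aggregate service curve $\beta$ substituted at the end --- is consistent with the proof structure in that paper, and correctly isolates the only facts the present paper actually needs: that $\gamma_i$ is a concrete UPP function and that $\beta^i = \gamma_i \circ \beta$ is a composition to which \Cref{thm:UPP-composition} and \Cref{prop:UPP-composition-g-UA} apply.
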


\end{document}